\newcommand{\R}{\mathbb{R}}
\newcommand{\dis}{\displaystyle}
\renewcommand{\bar}{\overline}
\renewcommand{\rho}{\varrho}
\renewcommand{\phi}{\varphi}
\newtheorem{thm}{Theorem}[section]
\newtheorem{rem}{Remark}[section]
\title{A reaction-diffusion model for Mycobacterium tuberculosis infection}
\author{  
C. Accarino$^1$\href{https://orcid.org/0000-0003-2345-9423}
{\includegraphics[scale=0.06]{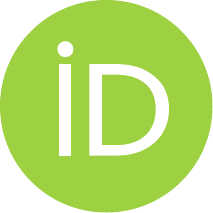}} 
\\ 
\texttt{chiara.accarino@unina.it} \\
\And
R. Accarino$^2$ 
\\ 
\texttt{rossella.accarino@unina.it} \\
\And
F. Capone$^1$\thanks{Corresponding author.} \href{https://orcid.org/0000-0002-0672-999X}{\includegraphics[scale=0.06]{orcid.eps}} \\ 
\texttt{fcapone@unina.it} \\
\And  
R. De Luca$^1$\href{https://orcid.org/0000-0002-2109-7564}{\includegraphics[scale=0.06]{orcid.eps}} \\ 
\texttt{roberta.deluca@unina.it} \\
\And
L. Fiorentino$^1$\href{https://orcid.org/0000-0002-0154-6035}{\includegraphics[scale=0.06]{orcid.eps}} \\ \texttt{ludovica.fiorentino@unina.it} \\ 
\And
G. Massa$^1$\href{https://orcid.org/0000-0002-8401-9176}{\includegraphics[scale=0.06]{orcid.eps}} \\ \texttt{giuliana.massa@unina.it}  
\\ \\ $^1$Dipartimento di Matematica e Applicazioni 'R.Caccioppoli' \\ Università degli Studi di Napoli Federico II \\ Via Cintia, Monte S.Angelo, 80126 Napoli \\ Italy \\
\\ \\ $^2$Pathology Unit, Department of Advanced Biomedical Sciences, \\ Università degli Studi di Napoli Federico II \\ Napoli \\ Italy \\ }
\begin{document}
\maketitle
\begin{abstract}
This paper aims to investigate a reaction-diffusion model which describes in-host infection for \emph{Mycobacterium tuberculosis} (Mtb) allowing random motion (i.e. linear diffusion) and chemotaxis of macrophages and bacteria populations. In particular, chemotaxis-driven aggregation of macrophages plays a fundamental role in the development of the Mtb infection and the production of chemokine -- located in the infection site -- represents an attractant for the uninfected macrophages, therefore we consider chemotaxis between infected macrophages and uninfected macrophages. The linear stability of the endemic equilibria is investigated. In particular, by looking for conditions guaranteeing that an equilibrium, stable in the absence of diffusion, becomes unstable when diffusion is allowed, the formation of Turing patterns -- that biologically represent the formation of granuloma initiated by the immune cells -- is investigated. Furthermore, a weakly nonlinear analysis is performed to deeply explore the patterns amplitude. 
\end{abstract}
\keywords{Mycobacterium tuberculosis \and Turing instability \and Instability Analysis} 

\section{Introduction}
Tuberculosis (TB), also known as the \emph{white death}, is an infectious disease caused by the bacillus \emph{Mycobacterium tuberculosis} (Mtb), and, according to the World Health Organization (WHO), in 2022 it was the second leading infectious killer after COVID-19 \cite{WHO2023}. \\
TB is initially transmitted via the respiratory route when infected people expel the bacteria into the air by coughing, sneezing or spitting, and can subsequently spread throughout the rest of the body, reaching the meninges, the lymph nodes, the bones, the stations of the monocyte-macrophage system, the abdominal organs and the urogenital system. Once \emph{Mycobacterium tuberculosis} is inhaled into the lungs, it is phagocytosed by alveolar macrophages, which are the cells of the innate immunity that primarily intervene to recognise the bacterium and to activate the cells of the adaptive immunity. \emph{Mycobacterium tuberculosis} prevents the fusion of the phagosome with the lysosomes, meanwhile the phagosome is able to fuse with other intracellular vesicles, ensuring access to nutrients and facilitating intravacuolar replication. The alveolar macrophages invade the underlying epithelial layer, and induce the activation of the immune system which sends the T-lymphocytes to the site of infection via blood vessels. These immune cells start a chronic inflammatory reaction called \emph{granuloma} which limits the spread of the bacteria, that are incorporated into the macrophages, by isolating them from the rest of the lung. Precisely, a tubercular granuloma is composed by a central area of caseous necrosis surrounded by epithelioid macrophages, lymphocytes and multinucleate giant cells. In Fig. \ref{fig1} is shown a picture of a slide of a caseating granuloma in a male patient of 23 years who was diagnosed the TBC in 2023 by the Pathology Unit of the Department of Advanced Biomedical Sciences of the University of Naples "Federico II". 
\begin{figure}
    \centering
    \includegraphics[scale=0.25]{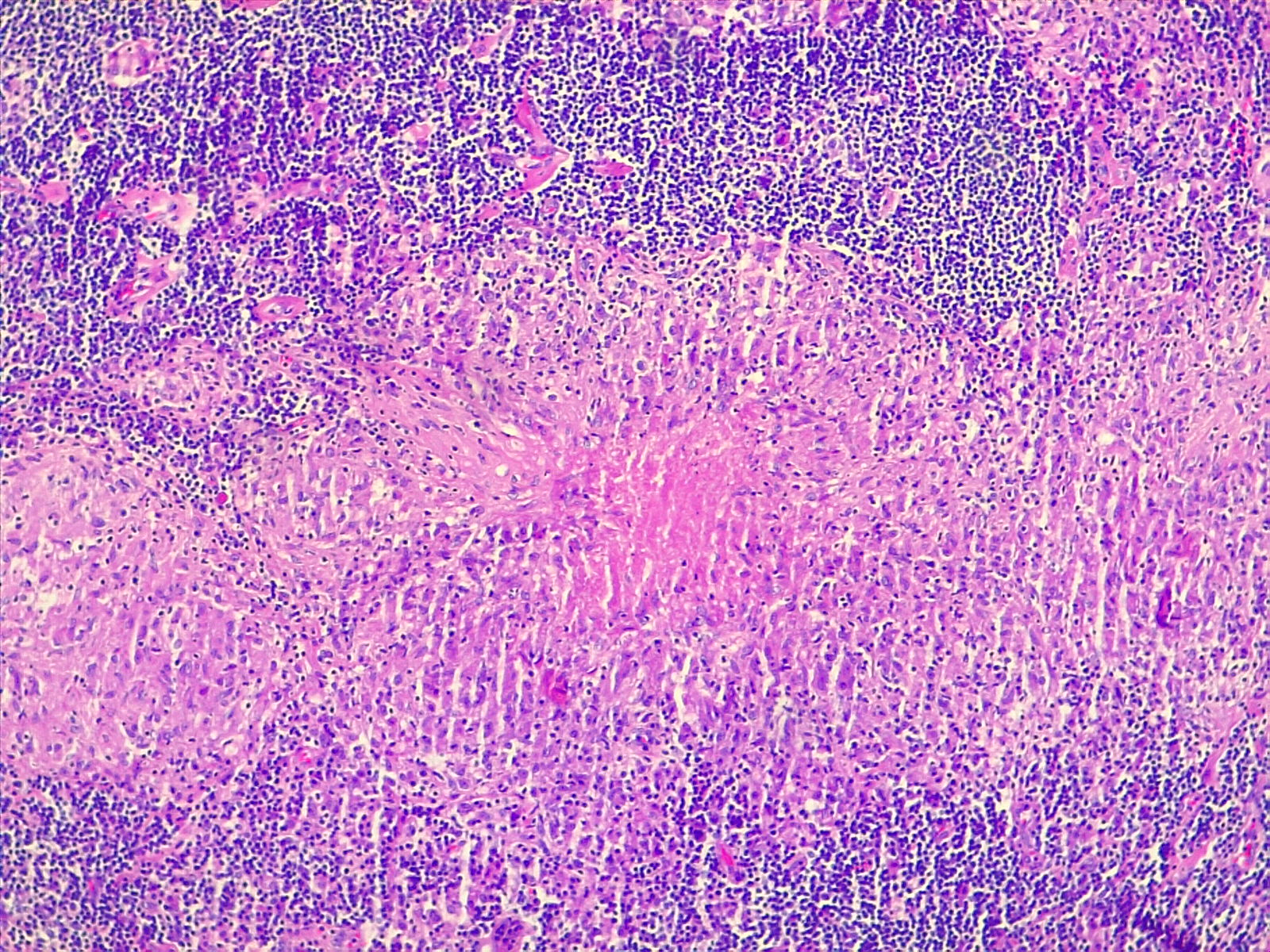}
    \caption{Caseating granuloma (haematoxylin and eosin, original magnification, x10).}
    \label{fig1}
\end{figure}
After the contagion, the immune system can eradicate the bacterium, so the disease is not developed, and the so-called clearance occurs. Alternatively, there may be a period of latency during which the disease is incubated, and in this period it can either happen that the bacterium is eradicated -- and also in this case we talk about clearance -- or that it develops the primary tuberculosis. Finally, it is possible to recover from the disease in this case too, however there is a risk that the bacterium is not definitively eradicated and that the disease can therefore be reactivated over the years, especially in the case of immune deficiency. Generally, after initial infection, the incubation period lasts from two to ten weeks, but the risk of developing the active disease is highest in the first two years. However, most infected people don't develop active disease, namely they remain with latent infection. The symptoms may be mild for many months, so it is easy to spread TB to others without knowing it. The common symptoms are prolonged cough (sometimes with blood), chest pain, weakness, fatigue, weight loss, fever and night sweats. Moreover, certain conditions (diabetes, weakened immune system, being malnourished and tobacco use) can increase the risk of developing active TB.
Without treatment, the death rate caused by TB disease is about $50\%$, while with $4-6$ months of anti-TB drug treatments about $85\%$ of infected people can be cured. Actually, according to WHO, in 2022 about 1.3 million people died. An estimated 10.6 million people fell ill with tuberculosis (TB) worldwide, including 5.8 million men, 3.5 million women and 1.3 million children \cite{WHO2023}.\\ 
\noindent
To study \emph{Mycobacterium tuberculosis} infection, mathematical models play a central role. Although in these models only some variables describing the phenomenon are considered, they are able to give some estimates about the spread of epidemics. In this way, the asymptotic behaviour of infection can be predicted and it is possible to try to control epidemics. Mathematical models for the transmission of \emph{Mycobacterium tuberculosis} have been widely studied in the literature (see for example \cite[and the reference therein]{Du2017,ZFH2020,ZFH2021,zhang2020,natalini2010}). In \cite{Du2017}, they considered a four-dimensional ODE system describing an in-host infection model, which they reduced to a three-dimensional model. The occurrence of periodic solutions, the existence, under certain conditions and by considering infected macrophages killing rate constant, of backward bifurcation, and the global stability of the system have been demonstrated. In \cite{ZFH2020,zhang2020} the authors have considered the complete four-dimensional ODE model, and they've demonstrated that there exist four different coexistence equilibria which describe the four different disease outcomes (primary tuberculosis, latent tuberculosis, clearance and reactivation), and they also found the conditions for forward and backward bifurcations. In \cite{ZFH2021}, the crucial factors describing various disease outcomes are determined.\\
\noindent
In order to better describe the phenomenon of the infection caused by Mtb, it is necessary to take into account the diffusion of the considered species. The reaction-diffusion models and their applications have been widely studied in literature. These models are very useful to describe phenomena like morphogenetic processes \cite{T1952}, medicine \cite{GLS2021}, population dynamics \cite{CDLFLM2023}, epidemiological systems \cite{CDL2017} and chemical reactions \cite{CDLT2018}, \cite{GLSS2013}. \\
\noindent
It has been proved in literature the random motion (self-diffusion) and directed cell-movement (chemotaxis) of macrophages and bacteria around the site of infection \cite[and references therein]{GammackEtAl2004,symon1972plasma}, therefore in this paper we generalize the mathematical model proposed by \cite{Du2017}, and further studied by \cite{ZFH2020,ZFH2021,zhang2020}, that describes in-host infection for Mycobacterium tuberculosis (Mtb), taking into account spatial non-homogeneously distribution of three heterogeneity distributed populations (uninfected macrophages -- infected macrophages -- bacteria), allowing self-diffusion and chemotaxis of the considered populations, in order to describe the \textit{formation of granuloma}, a pattern that, from a mathematical point of view, can be obtained via the analysis of an equilibrium state between bacteria and macrophages \cite{natalini2010}. Generally, geometrical patterns in bacterial populations can occur from local interactions between the considered cellular populations, depending on the interplay between these populations. In the presence of certain chemicals, many bacteria tend to move due to a gradient of concentration. If they move toward higher concentration of chemical, then it is a chemoattractant, while if they move toward a lower concentration of chemical, then it is a repellent \cite{Murray2_2002}. During Mtb infection, the immune system cells migrate to the infection site due to the release of chemokines by epithelial cells, macrophages, T-cells and bacteria. In particular, infected macrophages release various chemokines (e.g. IL-8, MIP2, IP-10, MCP-1) that attract uninfected macrophages, neutrophils and T cells to sites of infection \cite[and references therein]{natalini2010,GammackEtAl2004,symon1972plasma}. \\
When studying Turing pattern formation, weakly nonlinear analysis serves as a fundamental tool for comprehending pattern formation and it entails scrutinizing system behaviour in proximity to the initiation of pattern formation, where nonlinear influences are comparatively minor in relation to the broader dynamics. This method facilitates the development of simplified equations, typically achieved through perturbation techniques, which encapsulate the fundamental characteristics of pattern formation while disregarding higher-order nonlinearities. By concentrating on weakly nonlinear regimes, it is possible to glean insights into the mechanisms driving pattern formation without necessitating the resolution of the entire nonlinear system of equations, thereby rendering the analysis more manageable
\cite{BGMS2024, TLS2014, LBBGPS2017}.\\
The paper is organized as follows. In Section \ref{model} the problem formulation and the generalization of the model proposed in \cite{Du2017} into a PDE model are presented and the associated non-dimensional system is determined. The biologically meaningful equilibria are determined and discussed in Section \ref{equilibria}. In Sections \ref{preliminaries} and \ref{stability} the linear instability analysis of the coexistence equilibrium is performed, in order to determine conditions guaranteeing the stability of the endemic equilibrium in the absence of diffusion and diffusion-driven instability, namely Turing instability, in presence of self-diffusion and chemotaxis. In Section \ref{weakly} a weakly nonlinear analysis is performed to derive the Stuart-Landau equation for the amplitude. In Section \ref{numerical} the numerical investigations on the stability results are discussed and the governing equations are solved by means of a numerical scheme based on the finite difference method for both time and space variables. The paper ends with a concluding Section in which all the obtained results are summarised.

\section{Mathematical Model}\label{model}
Let us model tuberculosis in-host infection in a bounded domain $\Omega\subset\R^N \ (N=2,3)$. The model proposed by \cite{Du2017} is
\begin{equation}
    \begin{cases}\label{ODE-1}
    M'_u(\tilde{t})=s_M-\mu_MM_u(\tilde{t})-\beta M_u(\tilde{t})B(\tilde{t}), \\[2mm]
    M'_i(\tilde{t})=\beta M_u(\tilde{t})B(\tilde{t})-bM_i(\tilde{t})-\gamma M_i(\tilde{t})\dfrac{\frac{T(\tilde{t})}{M_i(\tilde{t})}}{\frac{T(\tilde{t})}{M_i(\tilde{t})}+c}, \\[2mm]
    B'(\tilde{t})=\delta B(\tilde{t})\left(1-\dfrac{B(\tilde{t})}{K}\right)+M_i(\tilde{t})\left(N_1b+N_2\gamma\dfrac{\frac{T(\tilde{t})}{M_i(\tilde{t})}}{\frac{T(\tilde{t})}{M_i(\tilde{t})}+c}\right)-M_u(\tilde{t})B(\tilde{t})(\eta+N_3\beta), \\[5mm]
    T'(\tilde{t})=s_T+\dfrac{c_MM_i(\tilde{t})T(\tilde{t})}{e_MT(\tilde{t})+1}+\dfrac{c_BB(\tilde{t})T(\tilde{t})}{e_BT(\tilde{t})+1}-\mu_TT(\tilde{t}).
    \end{cases}
   \end{equation}
where $M_u(\textbf{x},t)$ represents the uninfected macrophage population concentration, $M_i(\textbf{x},t)$ the infected macrophage population concentration, $B(\textbf{x},t)$ the bacteria population concentration and $T(\textbf{x},t)$ the CD4 T-cells population concentration. The uninfected macrophages are hired at constant rate $s_M$ and die naturally at constant rate $\mu_M$. Moreover, they become infected by bacteria at rate $\beta$. As regards the equation describing the evolution over the time of infected macrophages, they increase due to the number of new infections in the population $M_i$, and they are removed due either to the reproduction of Mtb inside of them, represented by the term $bM_i$, and to the killing at rate $\gamma$ by cell-mediated immunity, with saturating factor $c$. The bacterial population is divided into the extracellular and the intracellular bacteria. The former ones are killed by uninfected macrophages, while the latter ones die when a host macrophage is killed by apoptosis. The extracellular bacteria population grows logistically with constant rate $\delta$ and carrying capacity $K$. Due to macrophage bursting or death, intracellular bacteria become extracellular bacteria, and this is represented by the term $N_1bM_i$, with $N_1$ the average of bacteria released by a single host macrophage. Moreover, intracellular bacteria become extracellular bacteria also when a macrophage die by apoptosis, and this is denoted by the term $N_2\gamma M_i(T/M_i)/(T/M_i+c)$, where $N_2$ is the average of bacteria released after apoptosis, with $N_2<N_1$. The loss of extracellular bacteria is due either to the fact that they are killed by uninfected macrophages at rate $\eta$, and to the fact that by infecting the macrophages they become intracellular bacteria, and this is represented by the term $N_3\beta M_uB$, where $N_3$ is the average of bacteria carried by a single infected macrophage. The CD4 T-cells are recruited at constant rate $s_T$ and die at constant rate $\mu_T$. The infected macrophages stimulate the activation of the CD4 T-cells at rate $c_M$ with saturating factor $e_M$. Furthermore, CD4 T-cells can proliferate due to the presence of the bacteria at rate $c_B$ with saturating factor $e_B$.
   
The basic reproductive number of an infection, denoted by $\mathcal{\tilde{R}}_0$, is the expected number of new cases of infection directly generated by one infectious individual introduced into a population of all susceptible individuals. For the model of \cite{Du2017} $\mathcal{\tilde{R}}_0$ is given by

\begin{equation}
    {\mathcal{\tilde{R}}_0}=\dfrac{\delta}{(\eta+N_3\beta)M_u^0}+\dfrac{N_1 b+N_2\gamma}{b+\gamma}\dfrac{\beta}{\eta+N_3\beta},
\end{equation}  
with $M_u^0=\frac{s_M}{\mu_M}$ being the steady state for uninfected macrophages in a totally susceptible cell population. \\ \\
The cell-mediated immunity is driven by T-lymphocytes, which, at the site of infection, are mostly cytotoxic (CTLs) and CD4. As described in \cite{Du2017,WK2001,MK2004,GammackEtAl2005,SudEtAl2006}, the action of CD4 T-cells can maximize the killing efficacy of the macrophages and the proliferation of CTLs, therefore the action of CTLs can be considered proportional to the CD4 T-cell effector function (mathematically represented by $T/M_i$). Hence, the asymptotic behaviour of the immune response is described either as $T/M_i\ll c$ or $T/M_i\gg c$ and the infected macrophage killing term $\gamma M_i(t)\frac{(T(t)/M_i(t))}{(T(t)/M_i(t))+c}$ can be modified as $\gamma^* M_i$, where $\gamma^* \in \{0,\gamma\}$ (see \cite{Du2017}). This means that model \eqref{ODE-1} can be reduced to:
   \begin{equation}
    \begin{cases}\label{ODE-2}
    M'_u(\tilde{t})=s_M-\mu_MM_u(\tilde{t})-\beta M_u(\tilde{t})B(\tilde{t}), \\[2mm]
    M'_i(\tilde{t})=\beta M_u(\tilde{t})B(\tilde{t})-bM_i(\tilde{t})-\gamma^* M_i(\tilde{t}), \\[2mm]
    B'(\tilde{t})=\delta B(\tilde{t})\left(1-\dfrac{B(\tilde{t})}{K}\right)+\left(N_1b+N_2\gamma^*\right)M_i(\tilde{t})-(\eta+N_3\beta)M_u(\tilde{t})B(\tilde{t}).
    \end{cases}
   \end{equation}
   
    The basic reproductive number of the reduced model \eqref{ODE-2} is given by 
    \begin{equation}
    \label{Rasy}
    \tilde{\mathcal{R}}^{asy}_0=\dfrac{\delta}{(\eta+N_3\beta)M_u^0}+\dfrac{N_1 b+N_2\gamma^*}{b+\gamma^*}\dfrac{\beta}{\eta+N_3\beta},
\end{equation}
where $E_0=(M_u^0,0,0)$ is the disease-free equilibrium of \eqref{ODE-2}. \\
\noindent
Starting from system \eqref{ODE-2}, we allow a random motion (i.e. linear diffusion) and the chemotaxis (i.e. nonlinear diffusion) of the considered populations $M_u, M_i, B$, because for tuberculosis infection the production of chemokine, produced by extracellular bacteria and located in the infection site, represent an attractant for the uninfected macrophages. Hence the reaction-diffusion model that generalizes model \eqref{ODE-2} in $\Omega$, allowing self-diffusion and chemotaxis, is given by 
   \begin{equation}
    \begin{cases}\label{PDE}
    \dfrac{\partial M_u}{\partial \tilde{t}}=s_M-\mu_M M_u-\beta M_u B+ \tilde{\gamma_1} \Delta M_u - \tilde{\chi} \nabla \cdot (M_u \nabla M_i), \\[2mm]
    \dfrac{\partial M_i}{\partial \tilde{t}}=\beta M_u B -bM_i -\gamma^* M_i + \tilde{\gamma_2} \Delta M_i, \\[2mm]
    \dfrac{\partial B}{\partial \tilde{t}}=\delta B \left(1-\dfrac{B}{K}\right)+\left(N_1b+N_2\gamma^*\right)M_i -(\eta+N_3\beta)M_u B + \tilde{\gamma_3} \Delta B,
    \end{cases}
   \end{equation}
   under the following no-flux boundary conditions  
   \begin{equation}\label{BC}
       \nabla M_u \cdot \textbf{n}  = \nabla M_i \cdot \textbf{n} = \nabla B \cdot \textbf{n}= 0, \quad \text{on} \ \partial \Omega\times \R^+,
   \end{equation}
$\partial\Omega$ and $\textbf{n}$ being the boundary of $\Omega$ and the outward unit normal of $\partial\Omega$, respectively. Let us underline that $\tilde{\chi} \nabla \cdot (M_u \nabla M_i)$ is the chemotactic term and the chemokine activity is modelled by the model parameters. In particular, we assume $\tilde{\chi}$ to be positive since it models the chemoattractant activity of the chemokine. \\ Moreover, to system \eqref{PDE} we append the following smooth positive initial conditions
\begin{equation}
\label{IC}
    M_u(\tilde{\textbf{x}},0)=M_{u,0}(\tilde{\textbf{x}}), \ M_i(\tilde{\textbf{x}},0)=M_{i,0}(\tilde{\textbf{x}}), \ B(\tilde{\textbf{x}},0)=B_0(\tilde{\textbf{x}}) \quad \tilde{\textbf{x}}\in\Omega. 
\end{equation}
The model parameters are $\tilde{\Lambda}=\{s_M, \mu_M,b,\beta,\eta,\delta,K,\gamma^*,N_1,N_2,N_3,\tilde{\gamma_1},\tilde{\gamma_2},\tilde{\gamma_3},\tilde{\chi} \}\in\R_+^{15}$ and their biological meanings and values are reported in Table \ref{T1}. 
   \begin{table}
    \centering
     \resizebox{1\textwidth}{!}{
\begin{tabular}{|c|l|l|c|c|}
\hline
\textbf{Name} & \textbf{Meaning} & \textbf{Unit} & \textbf{Mean value (Range)} & \textbf{References} \\
\hline
$s_M$ & Recruitment rate of $M_u$ & 1/ml day & $5000 \ ([3300,7000])$ & \cite{WK2001,MK2004,GammackEtAl2005} \\
\hline
$s_T$ & Recruitment rate of $T$ & 1/ml day & $6.6 \ ([0.33,33])$ & \cite{WK2001,MK2004,GammackEtAl2005}  \\
\hline
$\mu_M$ & Loss rate of $M_u$ & 1/day & $0.01 \ ([0.01,0.011])$ & \cite{WK2001,MK2004,GammackEtAl2005}  \\
\hline
$b$ & Loss rate of $M_i$ & 1/day & $0.11 \ ([0.05,0.5])$ & \cite{WK2001,MK2004,GammackEtAl2005} \\
\hline
$\mu_T$ & Death rate of $T$ & 1/day & $0.33 \ ([0.05,0.33])$ & \cite{WK2001,MK2004,GammackEtAl2005} \\
\hline
$\beta$ & Infection rate by $B$ & ml/day & $5\times 10^{-6} \ ([10^{-8},10^{-5}])$ & \cite{WK2001,MK2004,GammackEtAl2005}  \\
\hline
$\gamma$ & Cell-mediated immunity rate & 1/day & $1.5 \ ([0.1,2])$ & \cite{WK2001,MK2004,GammackEtAl2005}  \\
\hline
$\eta$ & Bacteria killing rate by $M_u$ & ml/day & $1.25\times 10^{-9} \ ([1.25\times 10^{-9},1.25\times 10^{-7}])$ & \cite{WK2001,MK2004,GammackEtAl2005} \\
\hline
$\delta$ & Proliferation rate of $B$ & 1/day & $5\times 10^{-4} \ ([0,0.26])$ & \cite{WK2001,MK2004,GammackEtAl2005} \\
\hline
$c_M$ & Expansion rate of $T$ induced by $M_i$ & 1/day & $10^{-3} \ ([10^{-8},1])$ & \cite{WK2001,MK2004,GammackEtAl2005} \\
\hline
$c_B$ & Expansion rate of $T$ induced by $B$ & 1/day & $5\times 10^{-3} \ ([10^{-8},1])$ & \cite{WK2001,MK2004,GammackEtAl2005}  \\
\hline
$e_M$ & Saturating factor of $T$ expansion &  & $10^{-4}\ ([10^{-6},10^{-2}])$ & \cite{Du2017,ZFH2021}  \\
& related to $M_i$ & & &  \\
\hline
$e_B$ & Saturating factor of $T$ expansion &  & $10^{-4} \ $([$10^{-6}$,$10^{-2}$]) & \cite{Du2017,ZFH2021} \\
& related to $B$ & & & \\
\hline
$c$ & Half saturation ratio for $M_u$ lysis & $T/M_i$ & $3 \ ([0.3,30])$ & \cite{Du2017,ZFH2021}  \\
\hline
$K$ & Carrying capacity of $B$ & 1/ml & $10^8 \ $([$10^6$,$10^{10}$]) & \cite{Du2017,ZFH2021} \\
\hline
$N_1$ & Max MOI of $M_i$ & $B/M_i$ & $50 \ ([50,100])$ & \cite{WK2001,MK2004,GammackEtAl2005} \\
\hline
$N_2$ & Max number of $B$ released by & $T/M_i$ & $20 \ ([20,30])$ & \cite{WK2001,MK2004,GammackEtAl2005} \\
& apoptosis & & &  \\
\hline
$N_3$ & $N_3=N_1/2$ & $B/M_i$ & $25 \ ([25,50])$ & \cite{WK2001,MK2004,GammackEtAl2005}  \\
\hline
$\tilde{\gamma_1}$ & $M_u$ diffusion coefficient & m$^2$/day & $10^{-15} \ $([$10^{-15},10^{-9}$]) & \cite{GammackEtAl2004,natalini2010} \\
\hline
$\tilde{\gamma_2}$ & $M_i$ diffusion coefficient & m$^2$/day & $10^{-15} \ $([$10^{-15},10^{-9}$]) & \cite{GammackEtAl2004,natalini2010} \\
\hline
$\tilde{\gamma_3}$ & $B$ diffusion coefficient & m$^2$/day & $10^{-13} \ $([$10^{-14},10^{-13}$]) & \cite{GammackEtAl2004,natalini2010} \\
\hline
$\tilde{\chi}$ & Chemotaxis coefficient & m$^2$ ml/day & $10^{-3}$ \ $([10^{-3},10^3])$ & \cite{GammackEtAl2004,natalini2010} \\
\hline
\end{tabular} }
\caption{Biological meanings of the model parameters.}
\label{T1}
\end{table}
\\
In order to derive the dimensionless governing equations, let us introduce the following variables:
\begin{equation}\label{nodim}
\begin{aligned}
    &B=Kw, \ M_u=\frac{s_M}{\delta}u, \ M_i=\frac{\beta Ks_M}{\delta^2}i, \ \tilde{t}=\frac{1}{\delta}t, \ \tilde{\textbf{x}}=L\textbf{x}, \ \tilde{\gamma_i}=\delta L^2\gamma_i \quad (i=1,2,3), \\
    &\tilde{\chi}=\frac{\delta L^2}{K }\chi, \ \mu=\frac{\mu_M}{\delta}, \ n=\frac{\beta K}{\delta}, \ k=\frac{b}{\delta}, \ g=\frac{\gamma^*}{\delta}, \ h=\frac{\eta K}{\delta}, \ \alpha_1=\frac{\beta s_M}{\delta^2}, \ \alpha_2=\frac{s_M}{\delta K},
\end{aligned}
\end{equation}
where $L$ is the diameter of $\Omega$. Then, model \eqref{PDE}--\eqref{IC} becomes
   \begin{equation}
    \begin{cases}\label{PDE-2}
    \dfrac{\partial u}{\partial t}=1-\mu u-nuw+ \gamma_1 \Delta u - \chi \alpha_1 \nabla \cdot (u \nabla i), \\[2mm]
    \dfrac{\partial i}{\partial t}=uw -(k+g)i + \gamma_2 \Delta i, \\[2mm]
    \dfrac{\partial w}{\partial t}=w(1-w)+\alpha_1(N_1k+N_2g)i -\alpha_2(h+N_3n)uw + \gamma_3 \Delta w.
    \end{cases}
   \end{equation}
      \begin{equation}\label{BC-2}
       \nabla u \cdot \textbf{n}= \nabla i \cdot \textbf{n} = \nabla w \cdot \textbf{n} = 0, \quad \text{on} \ \partial \Omega\times \R^+,
   \end{equation}
   \begin{equation}
\label{IC-2}
    u(\textbf{x},0)=u_0(\textbf{x}), \ i(\textbf{x},0)=i_0(\textbf{x}), \ w(\textbf{x},0)=w_0(\textbf{x}) \quad \textbf{x}\in\Omega, 
\end{equation}
with $u_0(\textbf{x})>0, i_0(\textbf{x})>0, w_0(\textbf{x})>0$. According to \eqref{nodim}, the basic reproductive number \eqref{Rasy} becomes
\begin{equation}
    \mathcal{R}_0=\dfrac{\mu}{\alpha_2(h+N_3n)}+\dfrac{N_1k+N_2g}{k+g}\dfrac{n}{h+N_3n}.
\end{equation}

Let us denote by $\Lambda=\{ \mu,n,h,k,g,\alpha_1,\alpha_2,N_1,N_2,N_3,\gamma_1,\gamma_2,\gamma_3,\chi\}\in\R_+^{14}$ the set of dimensionless biological parameters.  

\section{Biologically equilibria: existence and a priori estimates}\label{equilibria}

The biologically meaningful equilibria are the steady states of \eqref{PDE-2}--\eqref{IC-2}, that are the non-negative solutions of the system
\begin{equation}
    \begin{cases}\label{EQ}
    1-\mu u-nuw+ \gamma_1 \Delta u - \chi \alpha_1 \nabla \cdot (u \nabla i)=0, \\[2mm]
    uw -(k+g)i + \gamma_2 \Delta i=0, \\[2mm]
    w(1-w)+\alpha_1(N_1k+N_2g)i -\alpha_2(h+N_3n)uw + \gamma_3 \Delta w=0.
    \end{cases}
   \end{equation}

For the sake of completeness, let us recall the results of \cite{Du2017} regarding the existence of meaningful constant equilibria. \\
The boundary equilibrium is the \emph{disease-free} equilibrium $E_0=\left(\frac{1}{\mu},0,0\right)$. The interior equilibrium $\bar{E}=(\bar{u},\bar{i},\bar{w})$ with non-null components represents the endemic equilibrium and is given by
\begin{equation}
    \bar{u}=\dfrac{1}{\mu+n\bar{w}}, \quad \bar{i}=\dfrac{\bar{w}}{(k+g)(\mu+n\bar{w})},
\end{equation}
where $\bar{w}$ is a positive solution of the equation
\begin{equation}
    a_2w^2+a_1w+a_0=0,
\end{equation}
with
\begin{equation}
\begin{aligned}
    &a_0=\alpha_2(h+N_3n)(1-\mathcal{R}_0), \\
    &a_1=\mu-n, \\
    &a_2=n.
\end{aligned} 
\end{equation}
According to \cite{Du2017}, the following theorem holds.  
\begin{thm}\label{esist}
     System \eqref{PDE-2}-\eqref{IC-2} has:
    \begin{enumerate}
        \item a unique endemic equilibrium if $\mathcal{R}_0>1$;
        \item a unique endemic equilibrium if $\mu < n$, and $\mathcal{R}_0=1$ or $\mathcal{R}_0=\mathcal{R}_c$, with $\mathcal{R}_c=1-\displaystyle\frac{(\mu-n)^2}{4 n \alpha_2 (h + N_3 n)}$;
        \item two endemic equilibria if $\mu<n$,  and $\mathcal{R}_c<\mathcal{R}_0<1$;
        \item no endemic equilibrium otherwise.
    \end{enumerate}
\end{thm}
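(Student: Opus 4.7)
The plan is to view the endemic equilibria as constant (spatially homogeneous) solutions of \eqref{EQ}, for which the diffusion and chemotaxis terms vanish identically and the system reduces to an algebraic one. Using the formulas $\bar u=1/(\mu+n\bar w)$ and $\bar i=\bar w/((k+g)(\mu+n\bar w))$ already displayed in the excerpt, each positive $\bar w$ yields a unique pair $(\bar u,\bar i)$ with $\bar u,\bar i>0$, so counting the endemic equilibria is in bijection with counting the strictly positive roots of
\[
P(w):=n\,w^{2}+(\mu-n)\,w+\alpha_{2}(h+N_{3}n)(1-\mathcal{R}_{0}).
\]

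First I would record the discriminant
\[
\Delta=(\mu-n)^{2}-4n\,\alpha_{2}(h+N_{3}n)(1-\mathcal{R}_{0})
\]
and verify the identity $\Delta=0\Longleftrightarrow \mathcal{R}_{0}=\mathcal{R}_{c}$, which fixes the role of $\mathcal{R}_{c}$ as the fold-bifurcation threshold. The remainder is a routine case analysis based on the signs of $a_{0}=\alpha_{2}(h+N_{3}n)(1-\mathcal{R}_{0})$, $a_{1}=\mu-n$ and $\Delta$, with $a_{2}=n>0$ fixed, combined with Vieta's formulas applied to $P$.

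If $\mathcal{R}_{0}>1$, then $a_{0}<0$, so the product of the roots $a_{0}/a_{2}$ is negative and exactly one root is positive, yielding item (1). If $\mathcal{R}_{0}=1$, then $P$ factors as $w(n w+\mu-n)$ whose non-trivial root $(n-\mu)/n$ is positive iff $\mu<n$; if $\mathcal{R}_{0}=\mathcal{R}_{c}$, the double root $(n-\mu)/(2n)$ is positive iff $\mu<n$; these sub-cases together give item (2). If $\mu<n$ and $\mathcal{R}_{c}<\mathcal{R}_{0}<1$, then $a_{0}>0$, $a_{1}<0$ and $\Delta>0$, so both Vieta sum $-a_{1}/a_{2}$ and product $a_{0}/a_{2}$ are strictly positive, producing two positive roots: item (3). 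The remaining region splits into (a) $\mu\ge n$ with $\mathcal{R}_{0}\le 1$, where every coefficient of $P$ is non-negative and $a_{2}>0$ forces $P(w)>0$ for $w>0$, and (b) $\mu<n$ with $\mathcal{R}_{0}<\mathcal{R}_{c}$, where $\Delta<0$ and $P$ has no real root; together these give item (4).

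The main obstacle I anticipate is not mathematical depth but the careful bookkeeping of the boundary configurations at $\mathcal{R}_{0}=1$, $\mathcal{R}_{0}=\mathcal{R}_{c}$ and $\mu=n$, where the quadratic degenerates or a candidate root can collapse onto $w=0$, which corresponds to the disease-free equilibrium $E_{0}$ and must be excluded from the endemic count. In particular one must check that at $\mu=n$ the equalities $\mathcal{R}_{0}=1$ and $\mathcal{R}_{0}=\mathcal{R}_{c}$ coincide and both give only the trivial root, so that the classification remains consistent with the "otherwise" clause of item (4).
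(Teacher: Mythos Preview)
Your argument is correct and complete: the reduction to the quadratic $P(w)=nw^{2}+(\mu-n)w+\alpha_{2}(h+N_{3}n)(1-\mathcal{R}_{0})$, the identification of $\mathcal{R}_{c}$ via $\Delta=0$, and the case split using Vieta's formulas and the sign of $\Delta$ are exactly the right ingredients, and you handle the boundary configurations carefully. Note, however, that the paper does not supply its own proof of this statement at all; it simply records the theorem with the sentence ``According to \cite{Du2017}, the following theorem holds,'' deferring entirely to the cited reference. Your proposal is therefore not so much an alternative to the paper's proof as a self-contained substitute for the omitted one, and it follows what is essentially the only natural route for classifying positive roots of a one-parameter family of quadratics.
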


\section{Preliminaries to stability}\label{preliminaries}
Let us consider a generic constant equilibrium $E^*=(u^*,i^*,w^*)$, and let us introduce the perturbation fields $\{U=u-u^*,I=i-i^*,W=w-w^*\}$. The perturbed system is given by
\begin{equation}
\label{PE}
    \begin{cases}
        \dfrac{\partial U}{\partial t}=a_{11}U+a_{12}I+a_{13}W+\gamma_1\Delta U-u^*\chi \alpha_1 \Delta I+F_1, \\[2mm]
        \dfrac{\partial I}{\partial t}=a_{21}U+a_{22}I+a_{23}W+\gamma_2\Delta I+F_2, \\[2mm]
        \dfrac{\partial W}{\partial t}=a_{31}U+a_{32}I+a_{33}W+\gamma_3\Delta W+F_3, 
    \end{cases}
\end{equation}
where
\begin{equation}\label{aij}
\begin{aligned}
    &a_{11}=-\mu-nw^*(<0), \quad a_{12}=0, \quad a_{13}=-nu^*(<0), \\
    &a_{21}=w^*(\geq 0), \quad a_{22}=-k-g(<0), \quad a_{23}=u^*(>0), \\
    &a_{31}=-\alpha_2(h+N_3n)w^*(\leq 0), \quad a_{32}=\alpha_1(N_1k+N_2g)(>0), \quad a_{33}=1-2w^*-\alpha_2(h+N_3n)u^*,
\end{aligned}    
\end{equation}
and the nonlinear terms $F_i(U,I,W)$, $i=1,2,3$ are
\begin{equation}
    F_1=-nUW-\chi \alpha_1 \nabla U\cdot \nabla I - \chi \alpha_1 U\Delta I, \quad F_2=UW, \quad F_3=-2W^2-\alpha_2(h+N_3n)UW.
\end{equation}
To system \eqref{PE} are associated the following no-flux boundary conditions
\begin{equation}
    \textbf{n}\cdot\nabla U=0, \quad \textbf{n}\cdot\nabla I=0, \quad \textbf{n}\cdot\nabla W=0, \quad \text{on} \ \partial \Omega\times \R^+. 
\end{equation}
\begin{rem}
  Let us remark that, in view of (\ref{EQ})$_3$, it turns out that
\begin{equation}
-w^*-\alpha_2(h+N_3 n)u^*= -1-\alpha_1 (N_1 k+ N_2 g)\dfrac{i^*}{w^*}.
\end{equation}
Then, from (\ref{aij})$_9$
\begin{equation}
 a_{33}=1-w^*-1-\alpha_1 (N_1 k+ N_2 g)\dfrac{i^*}{w^*}<0.
\end{equation}  
\end{rem}

\section{Linear instability analysis}\label{stability}
In this Section, we investigate the linear instability of the endemic equilibrium $\bar{E}$. We study the stability of the endemic equilibrium in the absence of diffusion and we look for conditions guaranteeing the occurrence of diffusion-driven instability, also called Turing instability. 
Moreover, for the sake of completeness, we recall the linear stability conditions determined in \cite{Du2017} both for boundary and interior equilibria ($E_0$ and $\bar{E}$) in the absence of diffusion.

\subsection{Linear instability in absence of diffusion}
In absence of diffusion, i.e. for $\gamma_i=0$ $(i=1,2,3)$ and $\chi=0$, the linear system associated to system \eqref{PE} evaluated around $\bar{E}$ is 
\begin{equation}\label{sist-matrND1}
    \dfrac{\partial \textbf{X}}{\partial t} = \mathcal{L}^0 \textbf{X}, 
\end{equation}
where
\begin{equation}\label{sist-matrND2}
    \begin{aligned}
        & \textbf{X}=(U,V,W), \qquad \mathcal{L}^0=\begin{pmatrix}
            a_{11} & 0 & a_{13} \\
            a_{21} & a_{22} & a_{23} \\
            a_{31} & a_{32} & a_{33} \\
        \end{pmatrix},
    \end{aligned}
\end{equation}
are the state vector and the Jacobian matrix evaluated in the coexistence equilibrium $\bar{E}$, according to \eqref{aij}.
The characteristic equation associated to $\mathcal{L}^0$ is
\begin{equation}\label{eq-carND}
    \lambda^3-I_1^0\lambda^2+I_2^0 \lambda-I_3^0=0,
\end{equation}
where $I_i^0$ ($i=1,2,3$) are the principal invariants in absence of diffusion, namely:
\begin{equation}
    \begin{array}{l}
        I_1^0=\text{tr}(\mathcal{L}^0)=a_{11}+a_{22}+a_{33}, \\
     I_2^0=\left|\begin{array}{cc}
     a_{11}&0\\
     a_{21}&a_{22}
     \end{array}\right| \!+\! \left|\begin{array}{cc}
     a_{11}&a_{13}\\
     a_{31}&a_{33}
     \end{array}\right| \!+\! \left|\begin{array}{cc}
     a_{22}&a_{23}\\
     a_{32}&a_{33}
     \end{array}\right|
=         a_{11}a_{22} \!+\!  a_{11}a_{33} \!-\! a_{13}a_{31} \!+\! a_{22}a_{33} \!-\! a_{23}a_{32}, \\[1.5mm]
I_3^0=\text{det}(\mathcal{L}^0)=a_{11}a_{22}a_{33}+a_{21}a_{32}a_{13}-a_{13}a_{22}a_{31}-a_{11}a_{23}a_{32}.
    \end{array}
\end{equation}
 The interior equilibrium $\Bar{E}$ is linearly stable if and only if the Routh-Hurwitz conditions are verified \cite{Murray2_2002,Murray1_2002,Merkin1997}:
    \begin{equation}\label{RH}
        I_1^0<0, \qquad I_3^0<0, \qquad I_1^0I_2^0-I_3^0<0. 
    \end{equation}
    Conditions \eqref{RH} obviously imply $I_2^0>0$. If at least one of \eqref{RH} is reversed, there exists at least one eigenvalue of $\mathcal{L}^0$ with positive real parts, and hence, in that case, the linear instability of $\Bar{E}$ is guaranteed.
\\
\noindent
According to \cite{Du2017}, in absence of diffusion the following theorems hold.
\begin{thm}
   The disease-free equilibrium $E_0$ of the system \eqref{PDE-2}-\eqref{IC-2} is locally asymptotically stable if and only if $\mathcal{R}_0<1$. Furthermore, $E_0$ is unstable if $\mathcal{R}_0>1$.
\end{thm}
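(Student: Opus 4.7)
The plan is to linearize the reaction part of \eqref{PDE-2} at the disease-free equilibrium $E_0=(1/\mu,0,0)$ and to exploit the block-triangular structure of the resulting Jacobian. Evaluating the entries \eqref{aij} at $u^*=1/\mu$, $i^*=w^*=0$ gives $a_{21}=0$ and $a_{31}=0$, so $\mathcal{L}^0$ has the form
\begin{equation*}
\mathcal{L}^0_{E_0}=\begin{pmatrix}
-\mu & 0 & -n/\mu \\
0 & -(k+g) & 1/\mu \\
0 & \alpha_1(N_1 k+N_2 g) & 1-\alpha_2(h+N_3 n)/\mu
\end{pmatrix}.
\end{equation*}
The first column immediately yields the eigenvalue $\lambda_1=-\mu<0$, so the stability of $E_0$ is governed by the remaining $2\times 2$ block $A$ in the lower-right corner.

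The second step is to study $\mathrm{tr}(A)$ and $\det(A)$ and to rewrite them in terms of $\mathcal{R}_0$. Using the relation $\alpha_1=n\alpha_2$, which follows directly from the definitions in \eqref{nodim}, a short computation shows
\begin{equation*}
\det(A)=\frac{\alpha_2(k+g)(h+N_3 n)}{\mu}\left[1-\mathcal{R}_0\right],
\end{equation*}
so $\det(A)>0$ if and only if $\mathcal{R}_0<1$. For the trace, note that if $\mathcal{R}_0<1$ then in particular the first (positive) term in $\mathcal{R}_0$ satisfies $\mu/(\alpha_2(h+N_3 n))<1$, hence $1-\alpha_2(h+N_3 n)/\mu<0$ and consequently
\begin{equation*}
\mathrm{tr}(A)=-(k+g)+1-\alpha_2(h+N_3 n)/\mu<-(k+g)<0.
\end{equation*}

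The third step concludes. If $\mathcal{R}_0<1$, the $2\times 2$ Routh–Hurwitz conditions ($\mathrm{tr}(A)<0$, $\det(A)>0$) guarantee that the two eigenvalues of $A$ have negative real part; together with $\lambda_1=-\mu<0$ this yields local asymptotic stability of $E_0$. Conversely, if $\mathcal{R}_0>1$ then $\det(A)<0$, so $A$ admits a real positive eigenvalue and therefore $\mathcal{L}^0_{E_0}$ does as well, which proves instability.

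I do not anticipate a serious obstacle: the main computation is the factorization of $\det(A)$ into the factor $(1-\mathcal{R}_0)$, which requires recognizing $\alpha_1=n\alpha_2$ and carefully collecting the terms in the expression for $\mathcal{R}_0$. The only mild subtlety is noticing that $\mathcal{R}_0<1$ is already strong enough to force the trace condition, so no additional hypothesis is needed.
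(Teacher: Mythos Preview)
Your argument is correct. The block-triangular structure of $\mathcal{L}^0$ at $E_0$ (coming from $a_{21}=a_{31}=0$), the factorization of $\det(A)$ via the identity $\alpha_1=n\alpha_2$, and the observation that $\mathcal{R}_0<1$ already forces $1-\alpha_2(h+N_3 n)/\mu<0$ and hence $\mathrm{tr}(A)<0$, are all valid. The one point you leave implicit is the borderline case $\mathcal{R}_0=1$, where $\det(A)=0$ gives a zero eigenvalue; this suffices to rule out \emph{linear} asymptotic stability, which is all the ``only if'' direction really claims in this context.

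As for comparison with the paper: there is nothing to compare. The paper does not prove this theorem; it simply quotes it from \cite{Du2017} (see the sentence preceding the statement: ``According to \cite{Du2017}, in absence of diffusion the following theorems hold''). Your proposal therefore supplies a proof where the paper provides none, and the route you take---reduction to the lower-right $2\times2$ infected block and identification of $\det(A)$ with a positive multiple of $1-\mathcal{R}_0$---is exactly the standard next-generation-type computation one would expect for such a result.
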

\begin{thm}
    System \eqref{PDE-2}-\eqref{IC-2} have:
    \begin{enumerate}
        \item two endemic equilibria, $E_+$ and $E_-$, if $\mathcal{R}_c<\mathcal{R}_0<1$. $E_+$ is locally asymptotically stable and $E_-$ is unstable;
        \item a unique endemic equilibrium $\hat{E}$ if $\mathcal{R}_0>1$. $\hat{E}$ is locally asymptotically stable.
    \end{enumerate}
\end{thm}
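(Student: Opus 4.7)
The plan is to invoke Theorem \ref{esist} for existence and to apply the Routh-Hurwitz conditions \eqref{RH} to discriminate the stability of each endemic equilibrium $E_+$, $E_-$, $\hat{E}$. Existence of $E_\pm$ under $\mathcal{R}_c<\mathcal{R}_0<1$ (with $\mu<n$) and of $\hat{E}$ under $\mathcal{R}_0>1$ is already granted by Theorem \ref{esist}, so the real task is the stability classification.

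The trace condition $I_1^0<0$ is automatic at every endemic equilibrium: from \eqref{aij} one has $a_{11}=-\mu-n\bar{w}<0$ and $a_{22}=-(k+g)<0$, while the Remark in Section \ref{preliminaries} (which rewrites $a_{33}$ by means of $(\ref{EQ})_3$) gives $a_{33}<0$. Hence $I_1^0=a_{11}+a_{22}+a_{33}<0$ at each of $E_+$, $E_-$, and $\hat{E}$, so the first Routh-Hurwitz inequality is never violated.

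The determinant $I_3^0$ is the decisive invariant. The plan is to substitute $\bar{u}=1/(\mu+n\bar{w})$ and $\bar{i}=\bar{w}/[(k+g)(\mu+n\bar{w})]$ into $\det\mathcal{L}^0$, and then reduce the resulting expression by means of the quadratic $P(w):=a_2 w^2+a_1 w+a_0=0$ satisfied by $\bar{w}$ in Section \ref{equilibria}. The expected outcome is that $I_3^0$ becomes a strictly negative multiple of $P'(\bar{w})=2a_2\bar{w}+a_1$. Since $a_2=n>0$: when $\mathcal{R}_0>1$ one has $a_0<0$, so the unique positive root $\hat{w}$ satisfies $P'(\hat{w})>0$; when $\mathcal{R}_c<\mathcal{R}_0<1$ and $\mu<n$, the two positive roots $w_-<w_+$ satisfy $P'(w_-)<0<P'(w_+)$. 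Therefore $I_3^0<0$ at $E_+$ and at $\hat{E}$, while $I_3^0>0$ at $E_-$, which by \eqref{RH} already forces the asserted instability of $E_-$.

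What is left, and is the main obstacle, is the third Routh-Hurwitz inequality $I_1^0 I_2^0-I_3^0<0$ at $E_+$ and $\hat{E}$ (which, combined with $I_1^0<0$ and $I_3^0<0$, also implies $I_2^0>0$). Plugging \eqref{aij} and the equilibrium relations into $I_1^0 I_2^0-I_3^0$ produces a polynomial expression in $\bar{w}$ whose sign is not transparent term by term. My strategy is to group the expansion so as to isolate factors of the form $(a_{11}+a_{22})(a_{11}a_{22}+a_{22}a_{33}+a_{11}a_{33})$ against $a_{21}a_{32}a_{13}-a_{13}a_{22}a_{31}-a_{11}a_{23}a_{32}$, and then to use $a_2\bar{w}^2=-a_1\bar{w}-a_0$ repeatedly to lower the degree until the inequality reduces to one involving only $\bar{w}$ and the parameters of $\Lambda$ on the stable branch $P'(\bar{w})>0$. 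Since the conclusion is attributed to \cite{Du2017}, should the bookkeeping become prohibitive, I would cross-check the reduced inequality against the explicit bounds derived there and invoke that reference to close the argument.
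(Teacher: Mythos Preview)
The paper does not prove this theorem at all: it is stated under the preamble ``According to \cite{Du2017}, in absence of diffusion the following theorems hold'' and no argument is supplied. So there is no in-paper proof to compare against; the authors simply quote the result.

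Your outline therefore goes further than the paper. The parts you have actually carried out are correct. The trace argument is immediate, and your key claim about the determinant can be checked: using $\bar u(\mu+n\bar w)=1$ together with the equilibrium identity $1-\bar w-\alpha_2(h+N_3n)\bar u=-\alpha_1(N_1k+N_2g)\bar u/(k+g)$ one obtains after simplification
\[
I_3^0=-\bar w\,(k+g)\,\bigl(2n\bar w+\mu-n\bigr)=-\bar w\,(k+g)\,P'(\bar w),
\]
which is exactly the ``strictly negative multiple of $P'(\bar w)$'' you predicted. This settles the instability of $E_-$ and gives $I_3^0<0$ at $E_+$ and $\hat E$.

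The only gap is the third Routh--Hurwitz inequality $I_1^0I_2^0-I_3^0<0$ at $E_+$ and $\hat E$, which you flag yourself. Your proposed reduction strategy is plausible but not carried out, and your fallback of invoking \cite{Du2017} is precisely what the paper does for the whole statement. In that sense your proposal is strictly stronger than the paper's treatment: you prove two of the three Routh--Hurwitz conditions directly and defer only the last one, whereas the paper defers everything.
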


\subsection{Chemotaxis driven linear instability}
In order to find conditions guaranteeing the occurrence of Turing instability, i.e. conditions guaranteeing that, $E_c$ stable in the homogeneous case, loses its stability in the heterogeneous case,  in this Section we consider simultaneous linear self-diffusion and chemotaxis. The linear system associated to the system \eqref{PE}, is given by

\begin{equation}\label{sist-matrC1}
    \dfrac{\partial \textbf{X}}{\partial t} = \mathcal{L}^0 \textbf{X} + \mathcal{D} \Delta \textbf{X}, 
\end{equation}
where $\textbf{X}$ and $\mathcal{L}^0$ are defined in \eqref{sist-matrND2}, while $\mathcal{D}$ is 
\begin{equation}
    \mathcal{D}=\begin{pmatrix}
            \gamma_{1} & -\bar u\chi \alpha_1 & 0 \\
            0 & \gamma_{2} & 0 \\
            0 & 0 & \gamma_{3} \\
        \end{pmatrix}
\end{equation}
with $\det \mathcal{D}=\gamma_1\gamma_2\gamma_3>0$.
Setting
\begin{equation}\label{set}
         \Gamma=\gamma_1\gamma_2+\gamma_1\gamma_3+\gamma_2\gamma_3(>0);\quad 
         D_\gamma=-\left[
         \gamma_1(a_{22}+a_{33})+\gamma_2(a_{11}+a_{33})+\gamma_3(a_{11}+a_{22})\right](>0),
   \end{equation}
the dispersion relation associated to \eqref{sist-matrC1} governing the eigenvalues $\lambda$ in terms of the wavenumber $k$, is
\begin{equation}\label{disp-relC1}
    \lambda^3-T_k(k^2)\lambda^2+I_2(k^2)\lambda-h(k^2)=0,
\end{equation}
where
\begin{equation}\label{disp-rel}
\mkern-18mu \mkern-9mu
    \begin{aligned}
       T_k(k^2)& \!=\! \text{tr}(\mathcal{L}^0)-k^2\text{tr}(\mathcal{D})=I^0_1-k^2(\gamma_1+\gamma_2+\gamma_3), \\
       I_2(k^2)& \!=\!\Gamma k^4 \!+\! k^2[D_{\gamma}- a_{21}\chi \alpha_1 \bar u] + I^0_2,\\[1.5mm]
        h(k^2)& \!=\! -k^6\text{det}(\mathcal{D})+k^4[a_{11}\gamma_2\gamma_3+a_{22}\gamma_1\gamma_3+a_{33}\gamma_1\gamma_2+a_{21}\gamma_3\alpha_1\chi \bar u]\\ &
\!-\! k^2[\gamma_1(a_{22}a_{33} \!-\! a_{23}a_{32}) \!+\! \gamma_2(a_{11}a_{33} \!-\! a_{13}a_{31}) \!+\! a_{11}a_{22}\gamma_3 \!+\! \alpha_1\chi(a_{21}a_{33} \!-\! a_{31}a_{23})\bar u]+I_3^0.
    \end{aligned}
\end{equation}
The following theorem holds.\begin{thm}
When $\bar{E}$ exists, if \eqref{RH} holds together with 
   \begin{equation}\label{TUR}
     \chi>\chi_c:=\displaystyle\frac{D_\gamma+2\sqrt{\Gamma I^0_2}}{\alpha_1 a_{21} \bar u}       
   \end{equation}
   then Turing instability occurs.
\end{thm}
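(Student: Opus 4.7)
The plan is to treat this as a standard diffusion/chemotaxis-driven instability argument based on the dispersion relation \eqref{disp-relC1}. Because $\bar{E}$ is assumed stable in the homogeneous case, the Routh--Hurwitz conditions \eqref{RH} give $I_1^0<0$, $I_3^0<0$, and $I_1^0 I_2^0 - I_3^0 <0$, which in particular force $I_2^0>0$. Turing instability will be established by exhibiting a wavenumber $k^2>0$ for which the cubic \eqref{disp-relC1} has at least one root with positive real part.

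First I would observe which of the three Routh--Hurwitz inequalities (applied to the perturbed cubic) can possibly be violated. Since
\[
T_k(k^2)=I_1^0-k^2(\gamma_1+\gamma_2+\gamma_3),
\]
and $I_1^0<0$, $\gamma_i>0$, we have $T_k(k^2)<0$ for all $k^2\geq 0$, so instability cannot enter through the trace condition. The natural route is therefore to make $I_2(k^2)<0$ for some $k^2>0$, since $I_2>0$ is a necessary condition for all roots of the cubic to have negative real parts. (If one preferred, one could alternatively track $h(k^2)$ changing sign, but the $I_2$-route gives the sharpest threshold and is exactly the one that produces $\chi_c$.)

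Next I would analyse $I_2(k^2)$ as a quadratic in $k^2$:
\[
I_2(k^2)=\Gamma k^4 + \bigl[D_\gamma - a_{21}\chi\alpha_1\bar u\bigr]k^2 + I_2^0,
\]
with $\Gamma>0$ and $I_2^0>0$. The quadratic $y\mapsto \Gamma y^2+[D_\gamma-a_{21}\chi\alpha_1\bar u]y+I_2^0$ takes a negative value for some $y>0$ if and only if its discriminant is positive and its minimum lies in $y>0$, i.e.
\[
(D_\gamma - a_{21}\chi\alpha_1\bar u)^2 > 4\Gamma I_2^0 \quad\text{and}\quad D_\gamma - a_{21}\chi\alpha_1\bar u <0.
\]
Using $\Gamma I_2^0>0$, these two inequalities collapse to the single condition
\[
a_{21}\chi\alpha_1\bar u - D_\gamma > 2\sqrt{\Gamma I_2^0},
\]
which, recalling that $a_{21}=\bar w>0$, $\alpha_1>0$, $\bar u>0$, rearranges to precisely $\chi>\chi_c$ as in \eqref{TUR}. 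Hence under \eqref{TUR} there exists an interval of wavenumbers $k^2$ with $I_2(k^2)<0$.

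Finally, for any such $k^2$, the cubic $\lambda^3-T_k\lambda^2+I_2\lambda-h=0$ has coefficients $(-T_k,\,I_2,\,-h)$ and the Routh--Hurwitz necessary condition $I_2>0$ fails, so at least one eigenvalue has positive real part; this is exactly the onset of Turing instability of $\bar E$. The only step requiring a bit of care is the simultaneous handling of the discriminant and the sign of the linear coefficient in $I_2(k^2)$, because a positive discriminant alone does not suffice (both roots in $k^2$ must be positive); writing the combined condition as the single inequality $a_{21}\chi\alpha_1\bar u - D_\gamma > 2\sqrt{\Gamma I_2^0}$ cleans this up and produces $\chi_c$ without case analysis.
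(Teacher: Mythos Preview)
Your argument is correct and follows essentially the same route as the paper: use \eqref{RH} to ensure $T_k(k^2)<0$ and $I_2^0>0$, then violate the Routh--Hurwitz condition $I_2(k^2)>0$ by minimising the quadratic $I_2(k^2)$ in $k^2$ and forcing the minimum negative, which yields exactly the threshold $\chi_c$. The paper carries out the minimisation explicitly (computing $k_c^2$ and $I_2(k_c^2)$) rather than via the discriminant, but the content is the same.
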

\begin{proof}
Conditions \eqref{RH} imply that $\bar E$ is linearly homogeneously stable. 
Since $I^0_1<0$, $I^0_2>0$ one has $T_k(k^2)<0, \forall k^2$. In order to find conditions guaranteeing the occurrence of diffusion-driven instability, one of the following conditions has to be reversed for at least a $\bar k\in\mathbb R^+$:
    \begin{equation}\label{RH2}
I_2(k^2)>0,\quad    h(k^2)<0, \quad T_k(k^2) I_2(k^2) - h(k^2)<0.
    \end{equation}
From \eqref{disp-rel}$_2$ one has that, a \emph{necessary condition} guaranteeing that (\ref{RH2})$_1$ can be reversed, for at least some $k^2\in\mathbb R^+$, is that 
\begin{equation}\label{cnec}
\chi>\bar\chi:=\displaystyle\frac{D_{\gamma}}{\alpha_1 a_{21}\bar u}.
\end{equation}
In view of 
$ \displaystyle\frac{\partial I_2}{\partial k^2}=2\Gamma k^2+(D_{\gamma}-\alpha_1 a_{21} \bar u \chi)$, if (\ref{cnec}) is satisfied, $I_2(k^2)$ attains a minimum in $k^2_c:=\displaystyle\frac{\alpha_1 a_{21} \bar u\chi-D_{\gamma}}{2\Gamma}$, given by
\begin{equation}\label{minI2}
    I_2(k^2_c)=-\displaystyle\frac{(\alpha_1 a_{21}\bar u\chi-D_{\gamma})^2}{4\Gamma}+I^0_2.
\end{equation}
Simple calculation shows that (\ref{TUR}) guarantees that $I_2(k^2_c)<0$ and hence Turing instability occurs.
\end{proof}
\begin{figure}[ht!]
    \centering
    \includegraphics[scale=0.6]{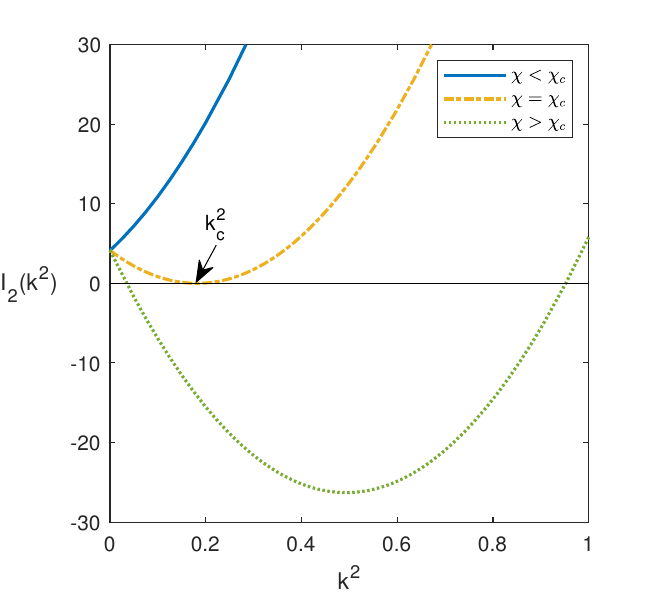}
    \caption{Function $I_2(k^2)$ plotted with respect to $k^2$ for quoted values of the bifurcation parameter $\chi$.}
    \label{I2k}
\end{figure}

According to the previous Theorem, in Figure \ref{I2k} $I_2(k^2)$ is plotted as function of $k^2$ for quoted values of $\chi$, which is our bifurcation parameter, in order to analyse how the chemotaxis parameter $\chi$ affects the function $I_2(k^2)$ and to show the range of unstable wavenumbers.

\section{Weakly nonlinear analysis}\label{weakly}
In this Section we shall derive the Stuart-Landau amplitude equation for the spatially periodic solutions to the system \eqref{PDE-2}. By using a multiple scale method we perform a weakly nonlinear analysis close to the uniform steady state $\bar E$. We set a small control parameter $\varepsilon^2=(\chi-\chi_c)/\chi_c$, describing the dimensionless distance between $\chi$ and $\chi_c$. Let us consider the perturbed system \eqref{PE}, that can be written as
\begin{equation} \label{wnl}
    \frac{\partial\textbf{v}}{\partial t}=\mathcal{L}^{\chi}\textbf{v}+\mathcal{N}\textbf{v},
\end{equation}
with $\textbf{v}=(U,I,W)^T$,
$\mathcal{L}^{\chi}=\mathcal{L}^0+\mathcal{D}\Delta$ is the linear operator and $\mathcal{N}$ is the operator which contains the contribution of the nonlinear terms given by
\begin{equation}
    \mathcal{N}\textbf{v}:=\left(\begin{array}{c}
         -\chi \alpha_1 \nabla U\cdot\nabla I-\chi\alpha_1 U\Delta I  \\
         0\\
         0
    \end{array}\right)+\left(\begin{array}{c}
         -nUW  \\
         UW\\
         -2W^2-\tilde\alpha_2 UW
    \end{array}\right)
\end{equation}
being $\tilde \alpha_2=\alpha_2(h+N_3n)$.
 We expand the bifurcation parameter $\chi$ and $\textbf{v}$ as 
\begin{equation} \label{exp}
\begin{aligned}
    &\textbf{v}=\varepsilon \textbf{v}_1+\varepsilon^2 \textbf{v}_2+\varepsilon^3 \textbf{v}_3+O(\varepsilon^4), \\
    & \chi=\chi_c+\varepsilon^2\chi_2+O(\varepsilon^4),
\end{aligned}
\end{equation}
and by assuming a multiple scale dependence $\mathbf{v}=\mathbf{v}(T_2,T_4,...)$, with $T_{2i}=\varepsilon^{2i}t$ for $i=1,2$, the time derivative can be expanded as
\begin{equation}
    \frac{\partial}{\partial t}=\frac{\partial}{\partial T_0}+\varepsilon^2\frac{\partial}{\partial T_2}+\varepsilon^4\frac{\partial}{\partial T_4}+O(\varepsilon^5). 
\end{equation}
The operator $\mathcal{L}^{\chi}$ can be expanded as follows
\begin{equation}
    \mathcal{L}^{\chi}=\mathcal{L}^{\chi_c}+\varepsilon^2\begin{pmatrix}
        0 & -\alpha_1\chi_2\bar u\Delta & 0 \\
        0 & 0 & 0 \\
        0 & 0 & 0
    \end{pmatrix}+O(\varepsilon^4),
\end{equation}
with 
\begin{equation}
    \mathcal{L}^{\chi_c}=\begin{pmatrix}
        a_{11}+\gamma_1\Delta & -\alpha_1\chi_c\bar u\Delta & a_{13} \\
        a_{21} & a_{22}+\gamma_2\Delta & a_{23} \\
        a_{31} & a_{32} & a_{33}+\gamma_3\Delta
    \end{pmatrix}.
\end{equation}
Now we substitute \eqref{exp} in \eqref{wnl} and we collect the terms at each order of $\varepsilon$. \\
At the first order $O(\varepsilon)$ we obtain
\begin{equation}\label{oeps}
    \mathcal{L}^{\chi_c}\textbf{v}_1=\mathbf{0}.
\end{equation}
By imposing the Neumann boundary conditions, we look for solutions as
\begin{equation}\label{eq42}
    \textbf{v}_1=\bm{\rho}A(T_2,T_4,...)\cos(k_cx), \quad \bm{\rho}=(\rho_u, \rho_i, \rho_w).   
\end{equation}
Substituting \eqref{eq42} in \eqref{oeps}, one obtains that system \eqref{oeps} is verified if 
\begin{equation}
    \bm{\rho}=\left(
\dfrac{\bar u \chi_c\alpha_1 k^2_c a_{23}-a_{13}(a_{22}-\gamma_2 k^2_c)}{(a_{22}-\gamma_2k^2_c)(a_{11}-\gamma_1k^2_c)-a_{21}\bar u \chi_c \alpha_1 k^2_c},\dfrac{a_{21}a_{13}-a_{23}(a_{11}-\gamma_1 k^2_c)}{(a_{22}-\gamma_2k^2_c)(a_{11}-\gamma_1k^2_c)-a_{21}\bar u \chi_c \alpha_1 k^2_c},   
     1\right).
\end{equation}
At the second order $O(\varepsilon^2)$ we obtain
\begin{equation}\label{oeps2}
    \mathcal{L}^{\chi_c}\mathbf{v}_2=\mathbf{F}(\mathbf{v}_1),
\end{equation}
with
\begin{equation}
    \mathbf{F}(\mathbf{v}_1)=\begin{pmatrix}
        nU_1W_1 \\
        -U_1W_1 \\
       2 W_1^2+\tilde\alpha_2 U_1W_1
    \end{pmatrix} +\chi_c\alpha_1\begin{pmatrix}
        \nabla U_1 \cdot \nabla I_1+U_1\Delta I_1 \\
        0 \\
        0
    \end{pmatrix}.
\end{equation}
By the Fredholm alternative, \eqref{oeps2} admits solutions if and only id the solvability condition is verified, i.e. if and only if $\langle \mathbf{F}, \tilde{\bm{\psi}} \rangle=0$, where $\tilde{\bm{\psi}}\in\ker{(\mathcal{L}^{\chi_c})^*}$, with $(\mathcal{L}^{\chi_c})^*$ the adjoint operator of $\mathcal{L}^{\chi_c}$. Since $\tilde{\bm{\psi}}=\bm{\psi}\cos(k_cx)$, with 
\begin{equation}
    \bm{\psi}=(\psi_u,\psi_i,\psi_w)=\left(
   \dfrac{a_{21}(a_{33}-\gamma_3k^2_c)-a_{31}a_{23}}{a_{23}(a_{11}-\gamma_1k^2_c)-a_{13}a_{21}},
      \dfrac{a_{13}a_{31}-(a_{11}-\gamma_1k^2_c)(a_{33}-\gamma_3 k^2_c)}{a_{23}(a_{11}-\gamma_1k^2_c)-a_{13}a_{21}},
   1\right),
\end{equation}
Fredholm alternative is automatically satisfied. By imposing the no-flux boundary conditions, the solution can be written as
\begin{equation}\label{eq47}
    \mathbf{v}_2=A^2[\bm{\mu}+\bm{\theta}\cos(2k_cx)], \quad \bm{\mu}=(\mu_u,\mu_i,\mu_w), \quad \bm{\theta}=(\theta_u,\theta_i,\theta_w),
\end{equation}
with $\mathbf \mu$ and $\mathbf \theta$ verifying
\begin{equation}\label{sistemalin}
    \begin{cases}
    (a_{11}-\gamma_1k^2_c)(\mu_u+\theta_u)+\bar u \chi_c\alpha_1 k^2_c (\mu_i+\theta_i)+a_{13}(\mu_w+\theta_w)=n\rho_u\rho_w-\chi_c\alpha_1 k^2_c\rho_u\rho_i,\\
     (a_{11}-\gamma_1k^2_c)(\mu_u-\theta_u)+\bar u \chi_c\alpha_1 k^2_c (\mu_i-\theta_i)+a_{13}(\mu_w-\theta_w)=\chi_c\alpha_1 k^2_c\rho_u\rho_i,\\
     a_{21}(\mu_u+\theta_u)+(a_{22}-\gamma_2k^2_c)(\mu_i+\theta_i)+a_{23}(\mu_w+\theta_w)=-\rho_u\rho_w,\\
     a_{21}(\mu_u-\theta_u)+(a_{22}-\gamma_2k^2_c)(\mu_i-\theta_i)+a_{23}(\mu_w-\theta_w)=0,\\
     a_{31}(\mu_u+\theta_u)+a_{32}(\mu_i+\theta_i)+(a_{33}-\gamma_3k^2_c)(\mu_w+\theta_w)=2\rho^2_w+\tilde\alpha_2\rho_u\rho_w,\\
      a_{31}(\mu_u-\theta_u)+a_{32}(\mu_i-\theta_i)+(a_{33}-\gamma_3k^2_c)(\mu_w-\theta_w)=0.
      \end{cases}
\end{equation}
At the third order $O(\varepsilon^3)$ one gets
\begin{equation}\label{oeps3}
    \mathcal{L}^{\chi_c}\mathbf{v}_3=\mathbf{G}(\mathbf{v}_1,\mathbf{v}_2),
\end{equation}
with
\begin{equation}\label{G}
\begin{aligned}
    \mathbf{G}(\mathbf{v}_1,\mathbf{v}_2)=\frac{\partial}{\partial T_2}\begin{pmatrix}
        U_1 \\
        I_1 \\
        W_1
    \end{pmatrix}&-\begin{pmatrix}
        \chi_2\alpha_1 \bar u k_c^2I_1 \\
        0 \\
        0
    \end{pmatrix}+\begin{pmatrix}
        n(U_1W_2+U_2W_1) \\
        -(U_1W_2+U_2W_1) \\
        4W_1W_2+\tilde\alpha_2(U_1W_2+U_2W_1)
    \end{pmatrix} + \\ 
    &+ \chi_c\alpha_1\begin{pmatrix}
        \nabla U_1\cdot \nabla I_2+ \nabla U_2\cdot \nabla I_1+U_1\Delta I_2+U_2\Delta I_1 \\
        0 \\
        0
    \end{pmatrix}.
\end{aligned}
\end{equation}
The right-hand side of \eqref{oeps3} can be rewritten as follows
\begin{equation}
    \mathbf{G}(\mathbf{v}_1,\mathbf{v}_2)=\left[\mathbf{G}^{(0)}\frac{\partial A}{\partial T_2}+\mathbf{G}^{(1)}A+\mathbf{G}^{(2)}A^3\right]\cos(k_cx)+\mathbf{G}^*\cos(3k_c x),
\end{equation}
with
\begin{equation}
    \begin{array}{l}
        \mathbf{G}^{(0)}=\begin{pmatrix}
            \rho_u \\
            \rho_i \\
            \rho_w 
        \end{pmatrix}, \quad
        \mathbf{G}^{(1)}=\left(\begin{array}{c}
            -\chi_2 \alpha_1\bar u k_c^2\rho_i \\
            0 \\
            0 
        \end{array}\right), \\
        \mathbf{G}^{(2)}=\left(\begin{array}{c}
           n\left[\rho_u\mu_w+\mu_u\rho_w+(\rho_u\theta_w+\rho_w\theta_u)/2\right]\\
           -\left[\rho_u\mu_w+\mu_u\rho_w+(\rho_u\theta_w+\rho_w\theta_u)/2\right]\\
           2(2\rho_w\mu_w+\rho_w\theta_w)+\tilde\alpha_2\left[\rho_u\mu_w+\mu_u\rho_w+(\rho_u\theta_w+\rho_w\theta_u)/2\right]
        \end{array}\right)\\
       \qquad \qquad
       + \chi_c\alpha_1 k^2_c\left(\begin{array}{c}
        -\rho_u\theta_i+\rho_i\theta_u/2-\rho_i\mu_u\\
        0\\
        0
        \end{array}\right).
    \end{array}
\end{equation}
 The solvability condition at this order leads to the Stuart-Landau equation for the amplitude
\begin{equation}\label{Landau}
    \dfrac{\partial A}{\partial T_2} = \sigma A - L A^3, 
\end{equation}
where
\begin{equation}\label{sigmaL}
    \sigma=-\frac{\langle\mathbf{G}^{(1)},\tilde{\bm{\psi}}\rangle}{\langle\mathbf{G}^{(0)},\tilde{\bm{\psi}}\rangle}, \quad L=\frac{\langle\mathbf{G}^{(2)},\tilde{\bm{\psi}}\rangle}{\langle\mathbf{G}^{(0)},\tilde{\bm{\psi}}\rangle}.
\end{equation}
The solutions of the Stuart--Landau equation \eqref{Landau} are the steady solution $A=0$ and $A_\infty=\pm \sqrt{\dfrac{\sigma}{L}}$. The stationary solution $A=0$ is stable if $\sigma<0$ and unstable if $\sigma>0$. When $A=0$ loses its stability, two equilibrium points exist and are stable (supercritical bifurcation). Then the solution reverts to one of the functions
\begin{equation}\label{super}
\mathbf{v}=\varepsilon {\bm{\rho}} A_\infty \cos (k_c x)+\varepsilon^2 A^2_\infty (\bm{\mu}+\bm{\theta} \cos (2k_cx)) + O(\varepsilon^3).
\end{equation}
If $L<0$, the weakly nonlinear analysis has to be extended up to the fifth order in $\varepsilon$. Up to the third order, we obtain the same equations as in the previous case. Then, if the solvability condition $<\bm{G},\tilde{\bm{\Psi}}>=0$ is satisfied, equation (\ref{oeps3}) admits the solution
\begin{equation}\label{sub3}
{\bm{v}_3}=(A{\bm{v}}_{31}+A^3{\bm{v}}_{32})\cos (k_c x)+A^3{\bm{v}}_{33}\cos (3k_cx),
\end{equation}
where $\bm{v}_{3i}, (i=1,2,3)$ verify
$$\begin{array}{l}
\mathcal L^{\chi_c}\mathbf v_{31}=\sigma \bm{\rho}+\bm{G}^{(1)}\\
\mathcal{L}^{\chi_c}\bm{v}_{32}=\bm{G}^{(2)}-L\bm{\rho},\\
\mathcal{L}^{\chi_c}\bm{v}_{33}=\bm{G}^{(*)}.
\end{array}
$$
At $O(\varepsilon^4)$, the solvability condition is automatically verified and the solution can be written as
$$
\mathbf v_4=A^2\mathbf v_{40}+A^4\mathbf v_{41}+(A^2\mathbf v_{42}+A^4\mathbf v_{43})\cos (2k_cx)+A^4 \mathbf v_{44}\cos (4k_cx),
$$
where $\mathbf v_{4i}, (i=0,\cdots, 4)$ are obtained by solving five linear systems.
\\
\noindent
Finally, at the fifth order, the Fredholm solvability condition leads to the Stuart-Landau equation
\begin{equation}\label{SL2}
\dfrac{\partial A}{\partial T_4}=\tilde{\sigma} A- \tilde{L} A^3+\tilde{Q} A^5,\end{equation}
where $\tilde \sigma, \tilde L, \tilde Q$ depend on the system parameters. Adding (\ref{Landau}) and (\ref{SL2}), one obtains the quintic Stuart-Landau equation
\begin{equation}\label{SL3}
\dis\frac{\partial A}{\partial T}=\bar\sigma A-\bar L A^3+\bar Q A^5,\end{equation}
being $\bar \sigma=\sigma+\varepsilon^2\tilde\sigma$.
The bifurcation analysis of equation (\ref{SL3}) leads to the following results.
There exists a threshold $\chi_s<\chi_c$ such that
\begin{itemize}
    \item if $\chi<\chi_s$ the only steady state is the origin (stable);
    \item if $\chi_s<\chi<\chi_c$, there are 5 equilibria. The origin (stable); the solutions  $A_{\infty,\pm}=\pm\sqrt{\dis\frac{\bar L-\sqrt{\bar L^2-4\bar\sigma \bar Q}}{2\bar Q}}$  (stable) and the solutions $A_{\infty,1,2}=\pm\sqrt{\dis\frac{\bar L+\sqrt{\bar L^2-4\bar\sigma \bar Q}}{2\bar Q}}$ (unstable);
    \item if $\chi>\chi_c$ there are 3 equilibria, the origin (unstable) and  $A_{\infty,\pm}$ (stable).
\end{itemize}
Then,
 for $\bar\sigma>0,\bar L<0, \bar Q<0$,  at the onset of instability (third case), there exists two real stable equilibria $A_{\infty,\pm}$ representing the asymptotic value of the amplitude.

\section{Numerical results}\label{numerical}
In this Section, we numerically investigate our theoretical results for several scenarios, to evaluate the influence of the biological parameters on the system dynamic.  We implemented a numerical scheme based on the finite difference method for both time and space variables on the MatLab software \cite{Garvie2007}.

According to Table \ref{T1} let us employ the following set of biological parameters $S_1=\{\mu=0.04,n=0.8,k=0.44,g=8,h=0.5,\alpha_1=0.0106,\alpha_2=0,0132,N_1=50,N_2=20,N_3=25\}$ together with $\gamma_1=\gamma_2=10^{-10}, \gamma_3=10^{-8}$ and $\chi=10^{-1}$. With this set of parameters, case 3 of Theorem \ref{esist} is verified, so there are two endemic equilibria, i.e.
\begin{equation}
    E^*_1=[23.7337,0.0075019,0.0026678], \ E^*_2=[1.2533,0.14068,0.94733]. 
\end{equation}
In absence of diffusion, for $E^*_1$ one gets $I_1^0=-13.9098$, $I_2^0=0.5931$, $I_3^0=0.0170$ and $I_1^0I_2^0-I_3^0=-8.2671$, hence conditions \eqref{RH} are not satisfied and $E^*_1$ is unstable. \\ For $E^*_2$, one gets $I_1^0=-10.4717$, $I_2^0=15.4569$, $I_3^0=-6.0424$ and $I_1^0I_2^0-I_3^0=-155.8169$, hence $E^*_2$ is locally stable in absence of diffusion. In particular in this scenario $\chi_c=1.3252\times10^{-4}$, for the endemic equilibrium $E_2^*$ conditions \eqref{RH}-\eqref{TUR} are both satisfied and the space distribution of the populations $u,i,w$, choosing as initial condition a random perturbation of $E_2^*$, are plotted in Figures \ref{E3tur}(a)-(c), after $200$ days in the spatial domain $[0,1] \times [0,1]$.

The next simulation set we adopted is 
$S_2=\{\mu=0.5,n=0.02,k=1,g=1.5,h=2,\alpha_1=0.001,\alpha_2=0.03,N_1=70,N_2=25,N_3=35 \}$
together with $\gamma_1=\gamma_2=10^{-9}, \gamma_3=10^{-13}$ and $\chi=10^{-1}$. In this configuration, case 1 of Theorem \ref{esist} is verified and the only endemic equilibrium is
\begin{equation}
    E^*_3=[1.9285,0.7149,0.9267].
\end{equation}
In absence of diffusion, for $E^*_3$ one gets $I_1^0=-4.0282$, $I_2^0=4.1338$, $I_3^0=-1.1979$ and $I_1^0I_2^0-I_3^0=-15.4536$, hence $E^*_3$ is locally stable. When self-diffusion and chemotaxis are taken into account, for $E_3^*$ conditions \eqref{RH}-\eqref{TUR} are both satisfied ($\chi_c=5.0945\times10^{-6}$), hence Turing patterns are expected. In Figures \ref{E5tur} the $2D$ spatial distribution of the numerical solutions of system \eqref{PDE-2}-\eqref{IC-2} are plotted, choosing as initial condition a random perturbation of $E_3^*$. In particular the \textit{complementary} spatial distributions of $u$, $i$ and $w$ are plotted for different time intervals, respectively. As one can see, with this set of parameters, the pattern formation happens right away after 10 days and continues to evolve with respect to time. In this configuration, the spatio-temporal evolution of uninfected macrophages $u$, infected macrophages $i$ and bacteria $w$ are plotted in Figure \ref{E5_ST}. 

In Figures \ref{E5_3_tur} the spatial distributions of the numerical solutions of \eqref{PDE-2}-\eqref{IC-2} are plotted  for $S_2 \cup \{ \gamma_1=\gamma_2=10^{-11}, \gamma_3=10^{-14}, \chi = 10^{-1} \}$, choosing a random perturbation of the endemic equilibrium $E_3^*$. In this scenario, one gets $\chi_c=5.0981\times10^{-8}$. Let us underline that from Figures \ref{E5tur} and \ref{E5_3_tur} one can observe a granuloma-like pattern. The scenarios depicted in Figures \ref{E5tur} and \ref{E5_3_tur} are different: in the first scenario one can observe a higher proliferation of uninfected macrophages, while in the second scenario a more even distribution of bacteria is appreciated, this is motivated by the fact that the value chosen for the diffusion coefficient of the macrophages is higher in the first configuration.

\begin{figure}[ht!]
    \centering
    \subfigure[]{
    \includegraphics[scale=0.35]{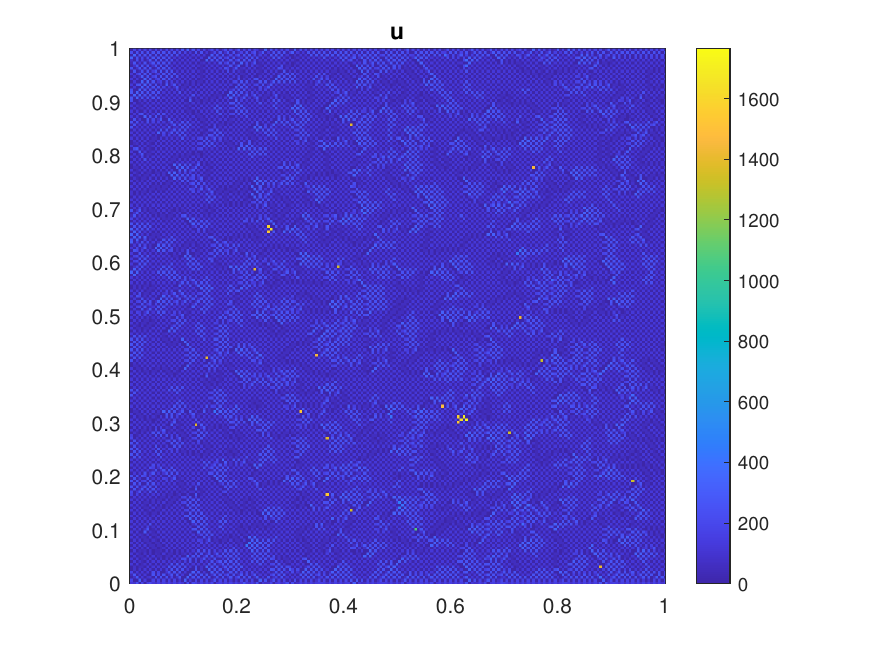}}
    \subfigure[]{
    \includegraphics[scale=0.35]{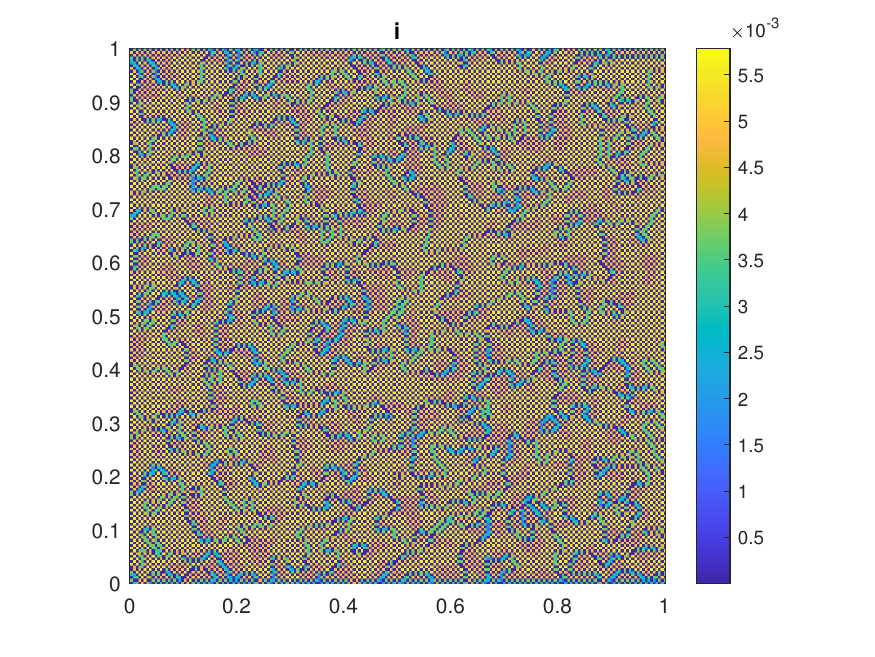}}
    \subfigure[]{
    \includegraphics[scale=0.37]{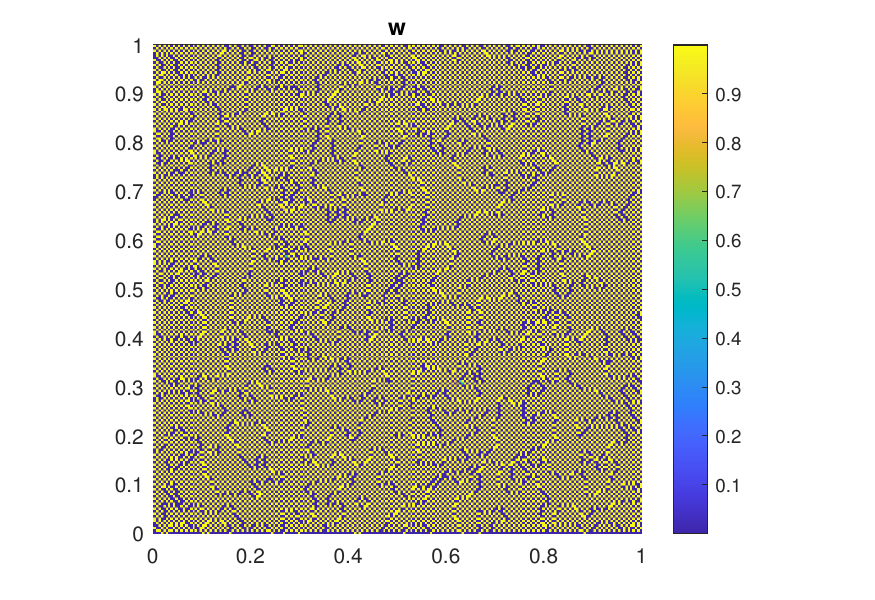}} 
    \caption{(a)-(c) Spatial distribution of the populations $u,i,w$ plotted choosing as initial conditions a random perturbation of $E^*_2$ and for parameters $S_1 \cup \{\gamma_1=\gamma_2=10^{-10}, \gamma_3=10^{-8},\chi=10^{-1} \}$, after $200$ days. In the spatial domain $[0,1] \times [0,1]$ with discretization time step $\Delta t=10^{-3}$, discretization space step $\Delta x=\Delta y=0.005$.}
    \label{E3tur}
\end{figure}

\begin{figure}[ht!]
    \centering
    \subfigure[10 days]{
    \includegraphics[scale=0.35]{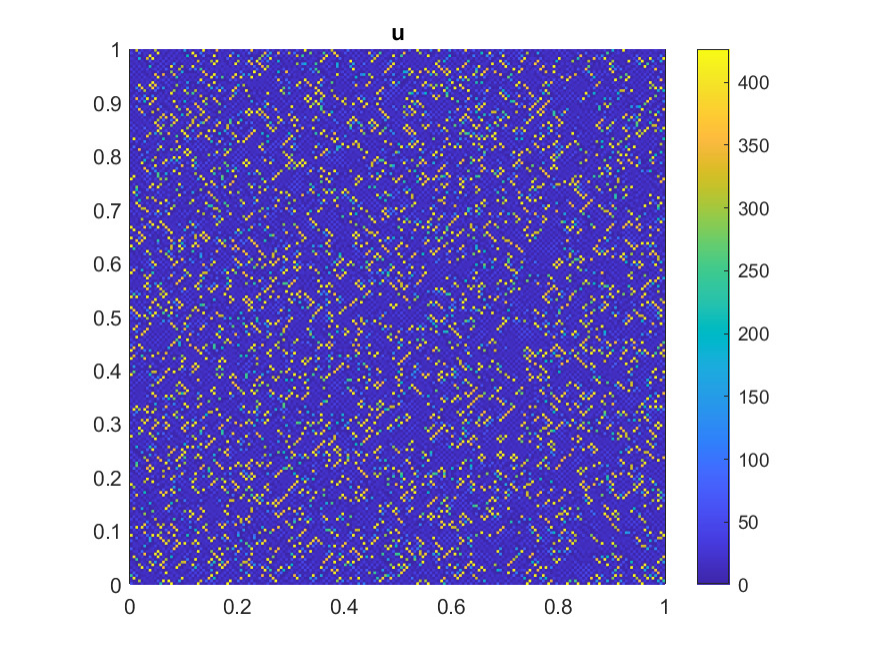}}
    \subfigure[10 days]{
    \includegraphics[scale=0.35]{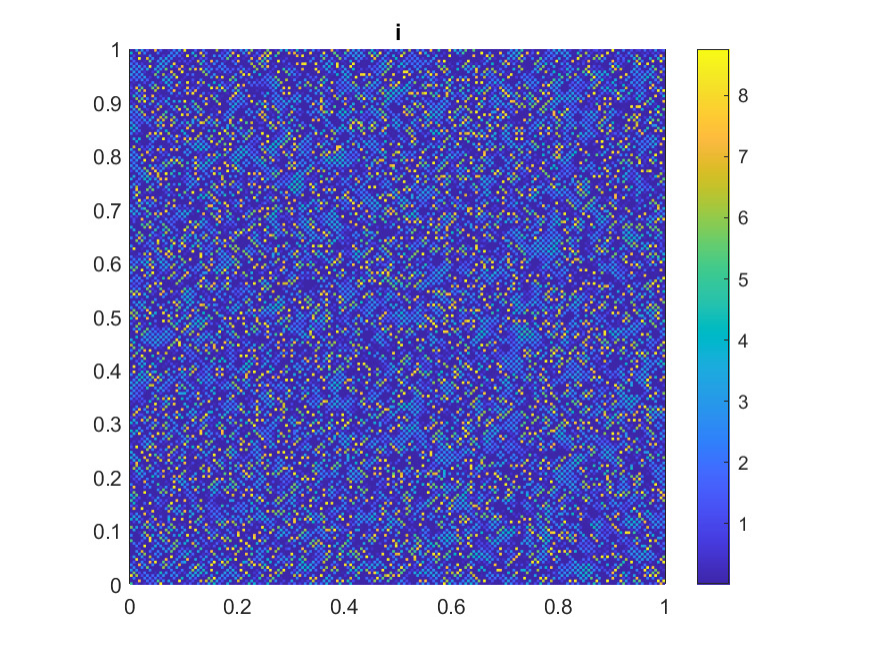}}
    \subfigure[10 days]{
    \includegraphics[scale=0.35]{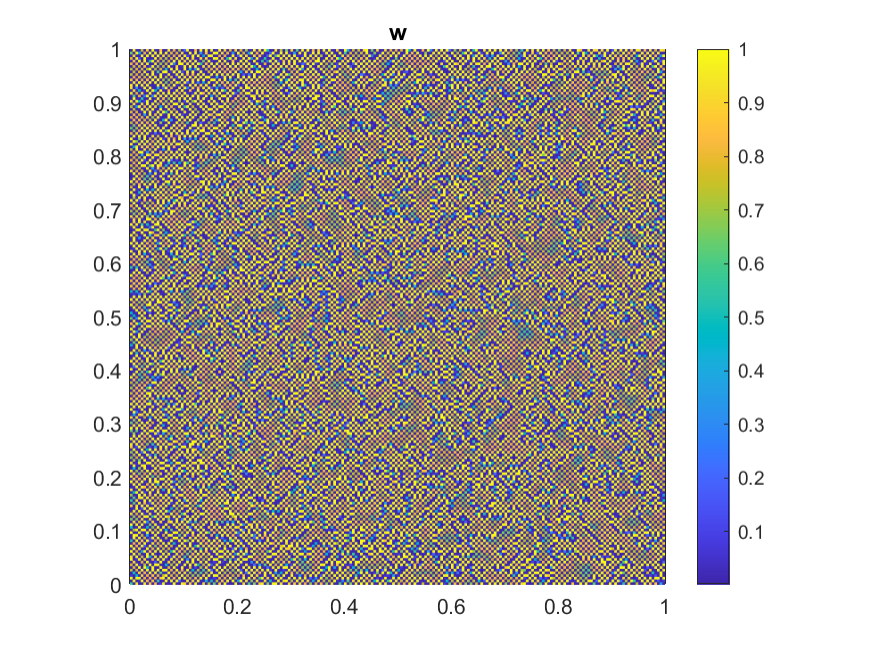}} \\
    \subfigure[20 days]{
    \includegraphics[scale=0.35]{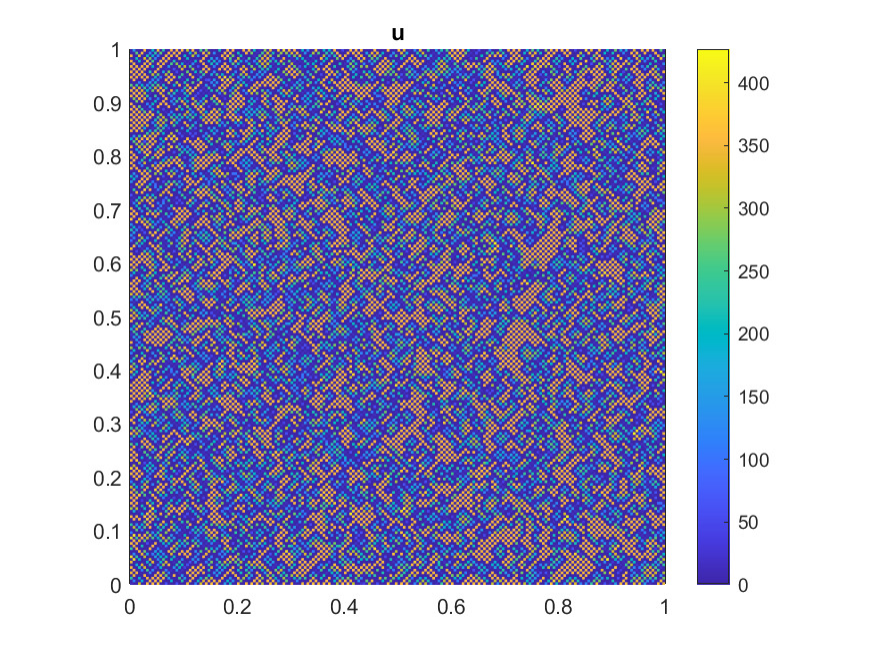}}
    \subfigure[20 days]{
    \includegraphics[scale=0.35]{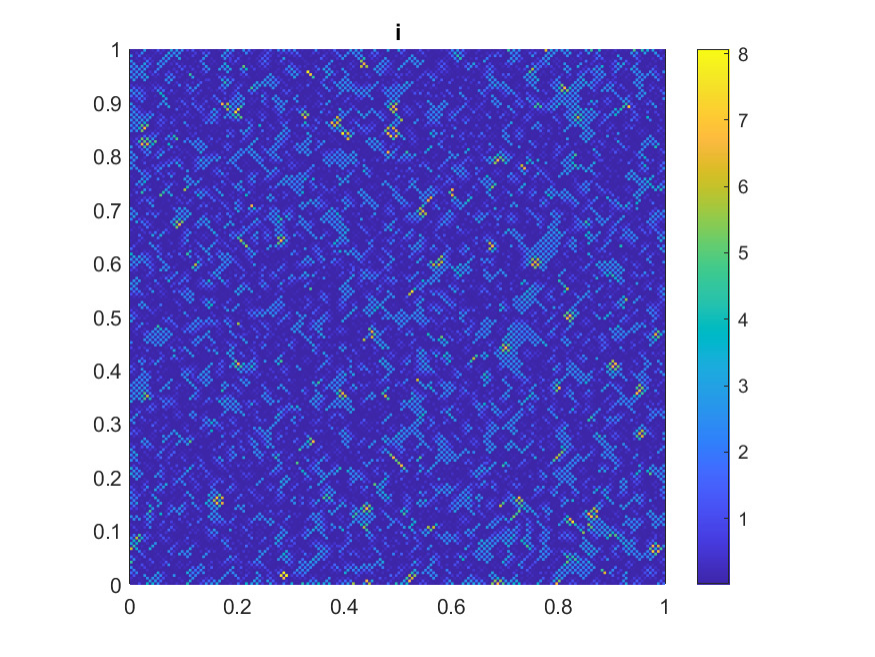}}
    \subfigure[20 days]{
    \includegraphics[scale=0.35]{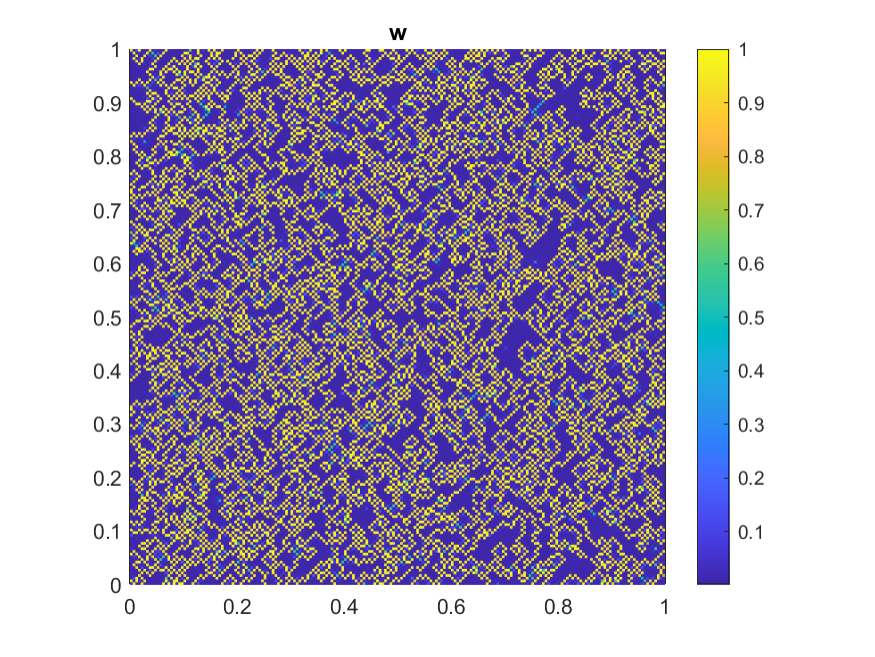}} \\
    \subfigure[30 days]{
    \includegraphics[scale=0.35]{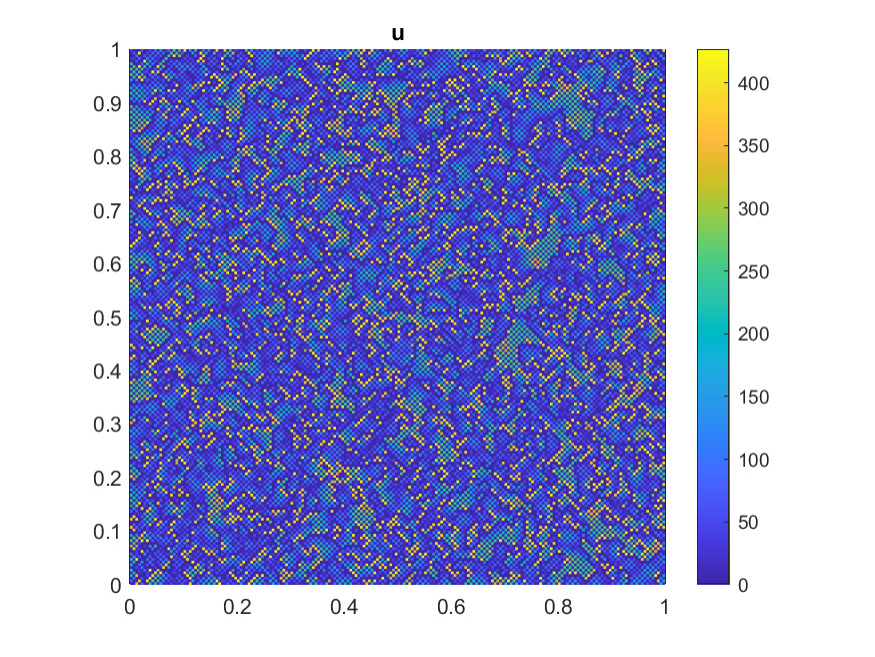}}
    \subfigure[30 days]{
    \includegraphics[scale=0.35]{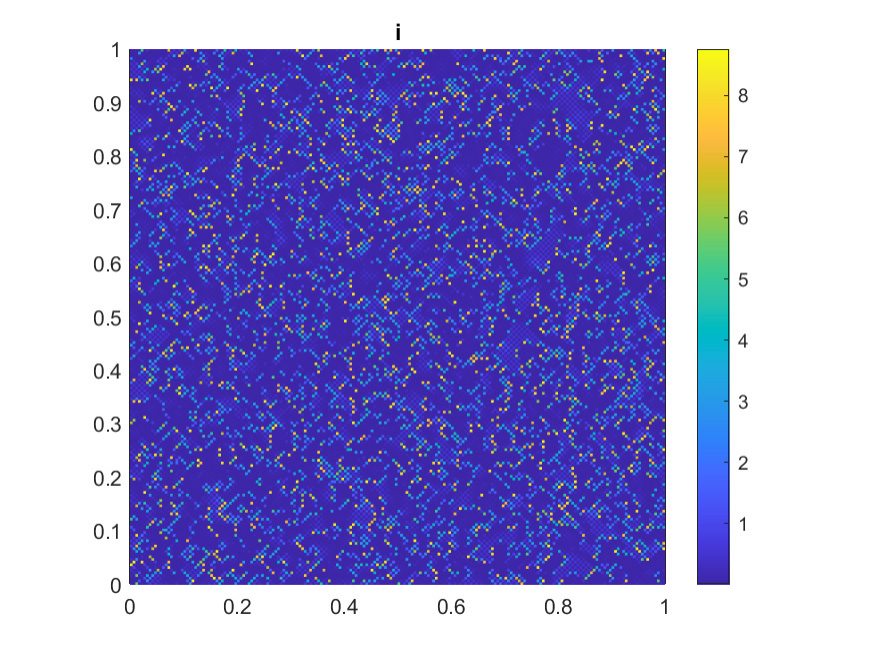}}
    \subfigure[30 days]{
    \includegraphics[scale=0.35]{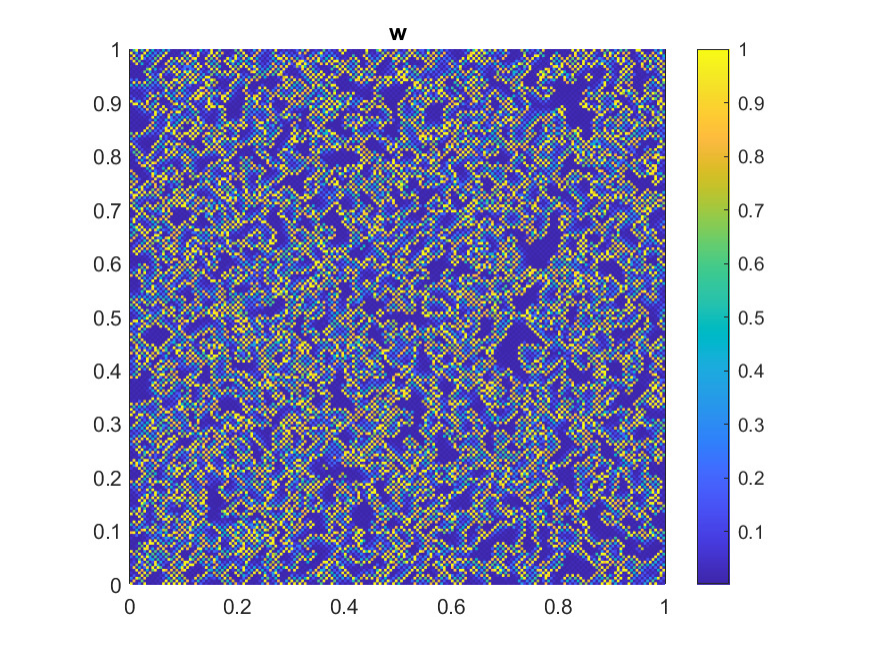}} \\
    \subfigure[100 days]{
    \includegraphics[scale=0.35]{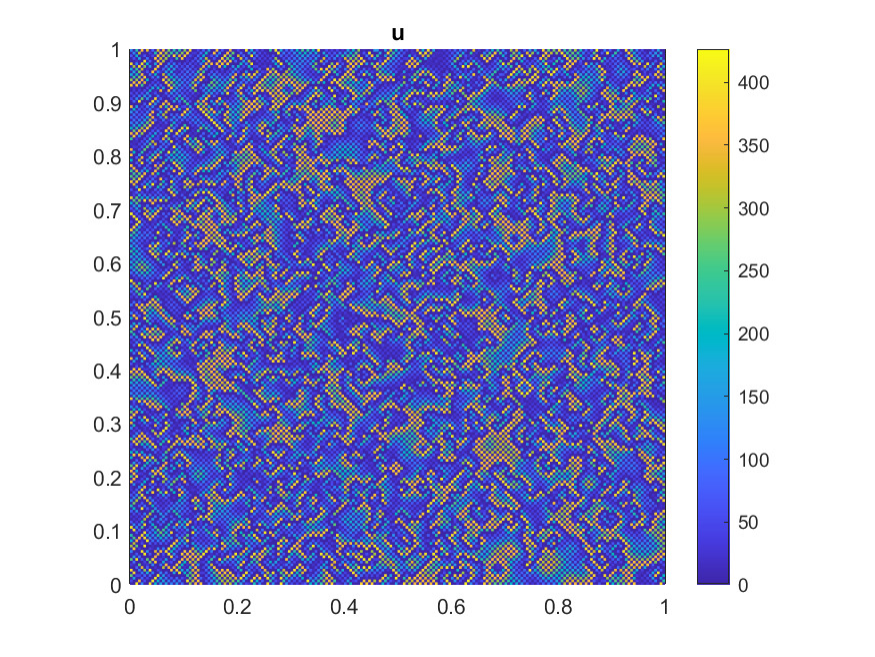}}
    \subfigure[100 days]{
    \includegraphics[scale=0.35]{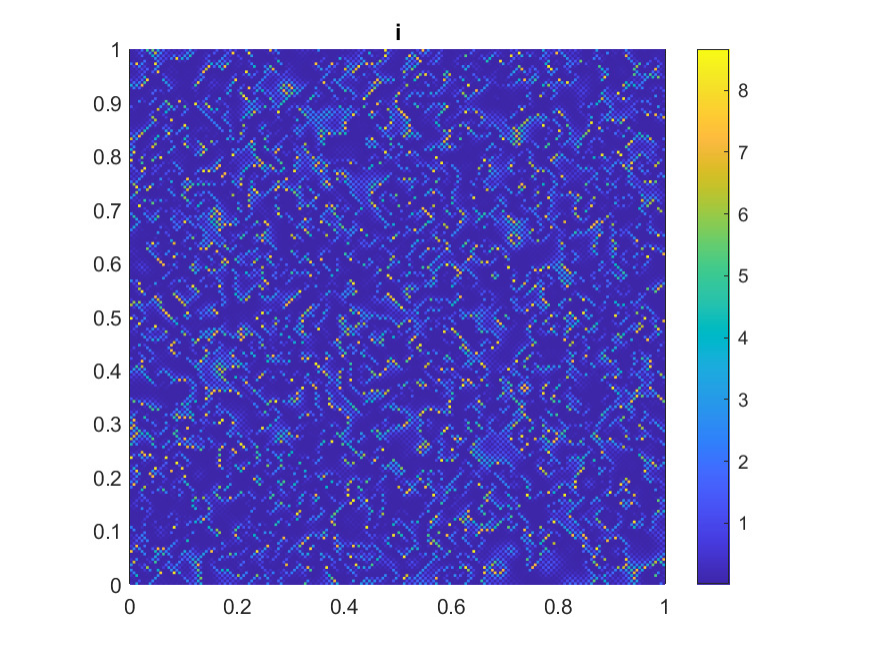}}
    \subfigure[100 days]{
    \includegraphics[scale=0.35]{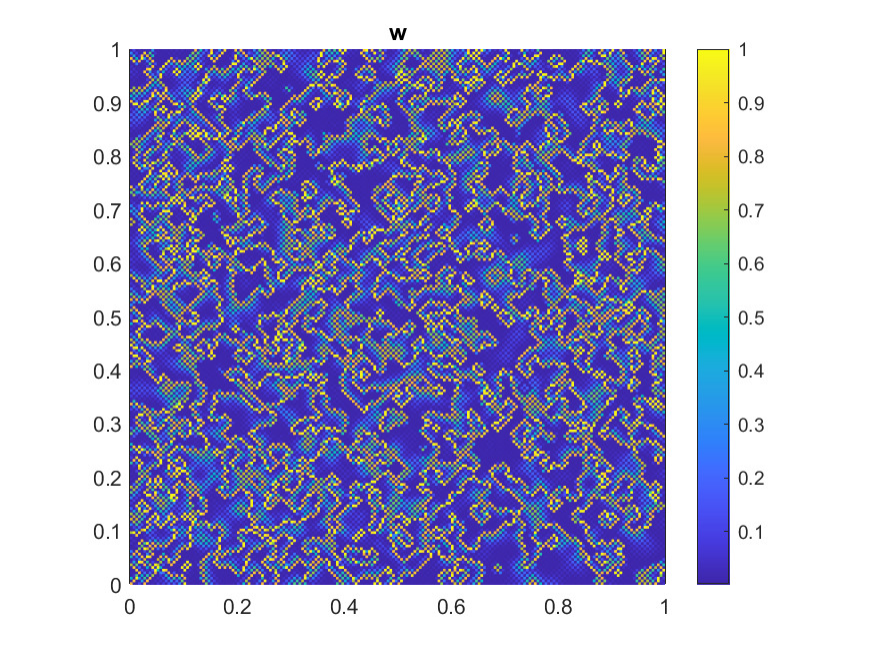}}
    \caption{Spatial distribution of the populations $u,i,w$ plotted choosing as initial conditions a random perturbation of $E^*_3$ and for parameters $S_2 \cup \{\gamma_1=\gamma_2=10^{-9}, \gamma_3=10^{-13},\chi=10^{-1} \}$, after $10$ days in (a)-(c), after $20$ days in (d)-(f), after $30$ days in (g)-(i), after $100$ days in (j)-(l). Discretization space step $\Delta x=\Delta y=0.025$, discretization time step $\Delta t=10^{-3}$.}
    \label{E5tur}
\end{figure}

\begin{figure}[ht!]
    \centering
    \subfigure[]{
    \includegraphics[scale=0.35]{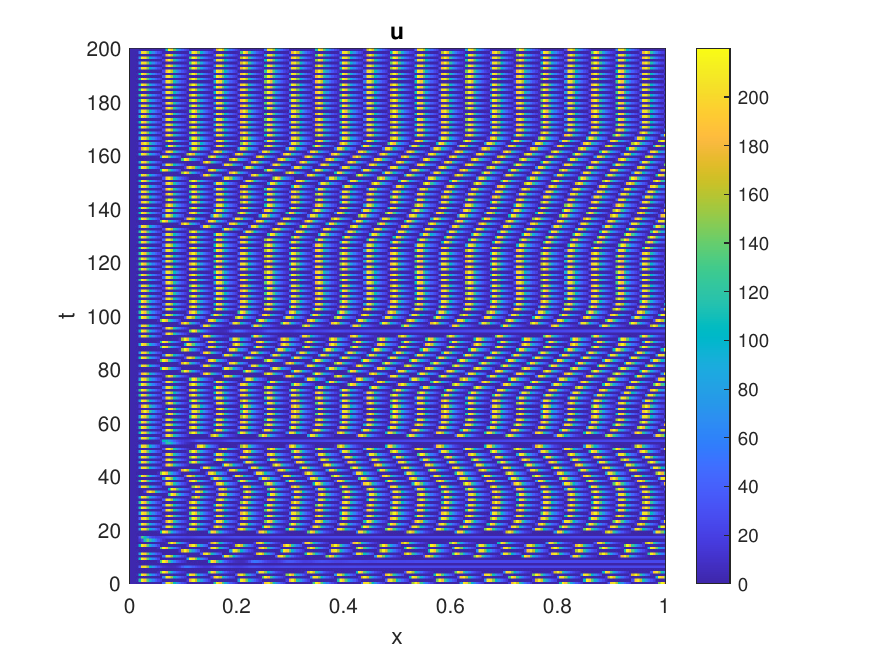}}
    \subfigure[]{
    \includegraphics[scale=0.35]{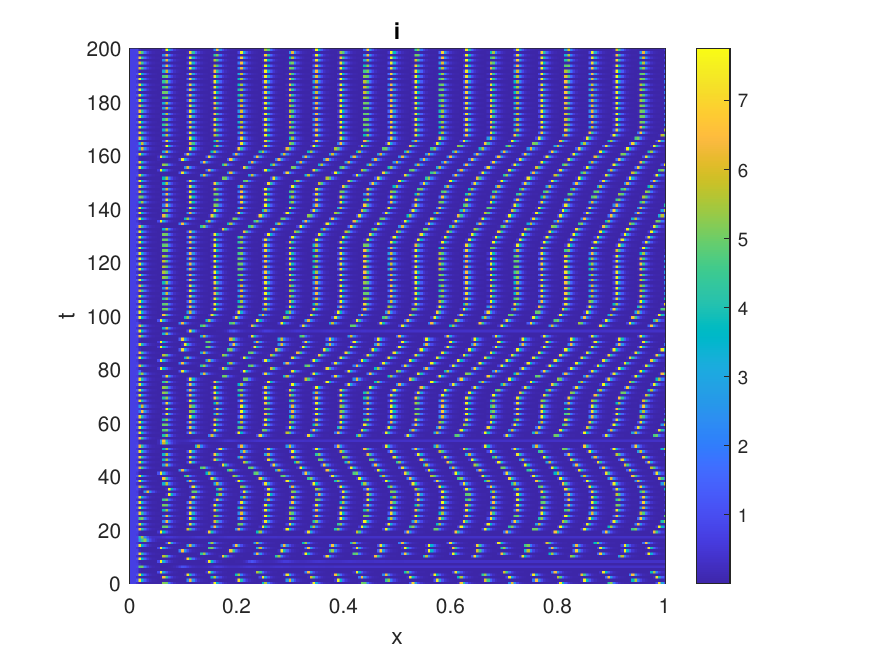}}
    \subfigure[]{
    \includegraphics[scale=0.35]{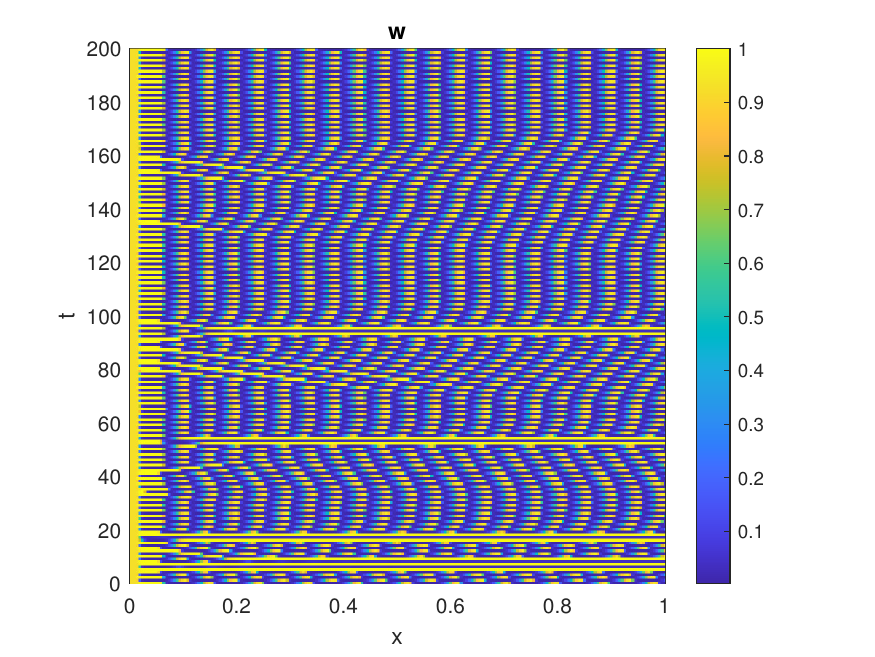}} 
    \caption{Spatio-temporal evolution of the $1D$ numerical solutions $u,i,w$ plotted choosing as initial conditions a random perturbation of $E^*_3$ and for parameters $S_2 \cup \{\gamma_1=\gamma_2=10^{-9}, \gamma_3=10^{-13},\chi=10^{-1} \}$. Discretization space step $\Delta x=0.005$, discretization time step $\Delta t=10^{-3}$.}
    \label{E5_ST}
\end{figure}

\begin{figure}[ht!]
    \centering
    \subfigure[]{
    \includegraphics[scale=0.35]{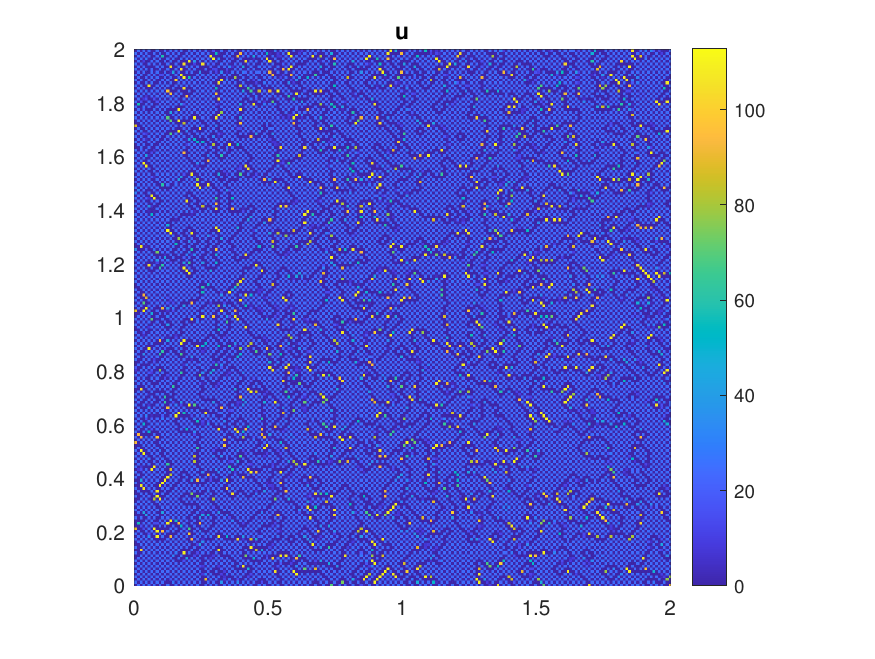}}
    \subfigure[]{
    \includegraphics[scale=0.35]{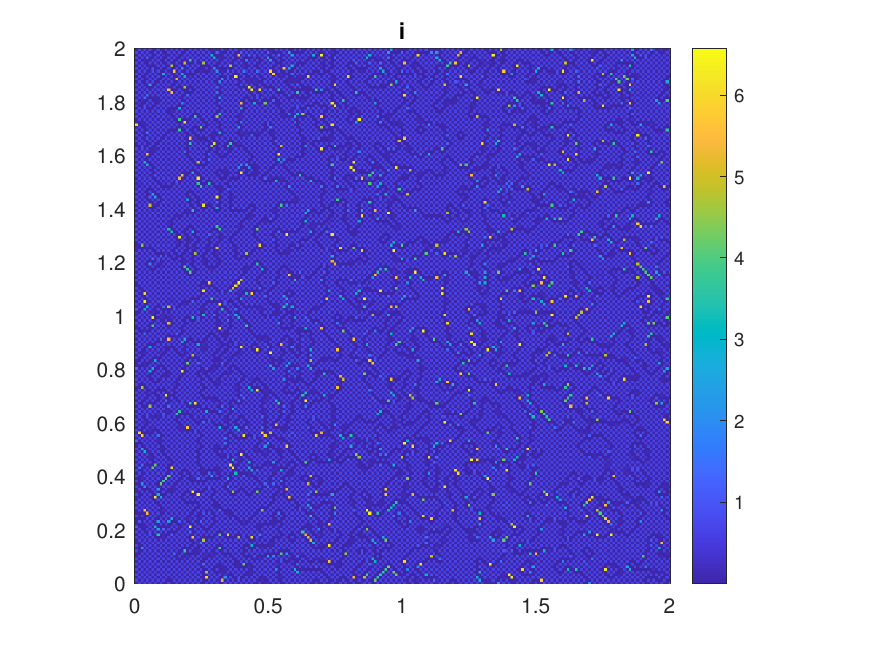}}
    \subfigure[]{
    \includegraphics[scale=0.35]{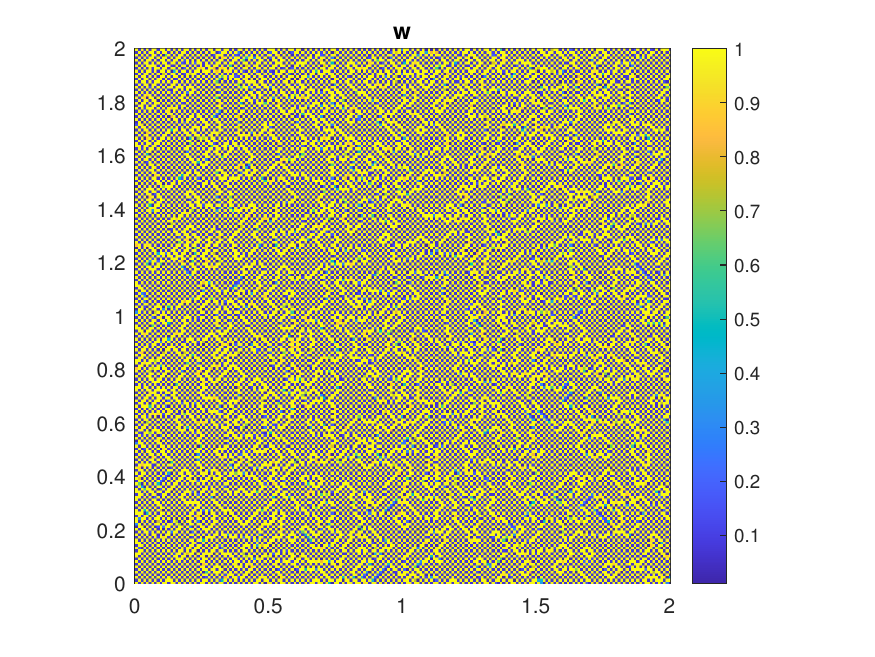}} 
    \caption{Spatial distribution of the populations $u,i,w$ plotted choosing as initial conditions a random perturbation of $E^*_3$ and for parameters $S_2 \cup \{\gamma_1=\gamma_2=10^{-11}, \gamma_3=10^{-14},\chi=10^{-1} \}$, after $100$ days in the spatial domain $[0,2] \times [0,2]$. Discretization space step $\Delta x=\Delta y=0.01$, discretization time step $\Delta t=10^{-4}$.}
    \label{E5_3_tur}
\end{figure}

Finally, let us investigate the results found in the previous Section dedicated to the weakly nonlinear analysis. In the sets of parameters $S_1$ and $S_2$, we plotted $\sigma$ and $L$ defined in \eqref{sigmaL} as function of the bifurcation parameter $\chi$, see Figures \ref{S2_sigmaL_E3}--\ref{S4_sigmaL}. Let us underline that $\sigma$ is positive in the region where Turing patterns are expected (i.e. if $\chi>\chi_c$).  

\begin{figure}[ht!]
    \centering
    \subfigure[]{
    \includegraphics[scale=0.18]{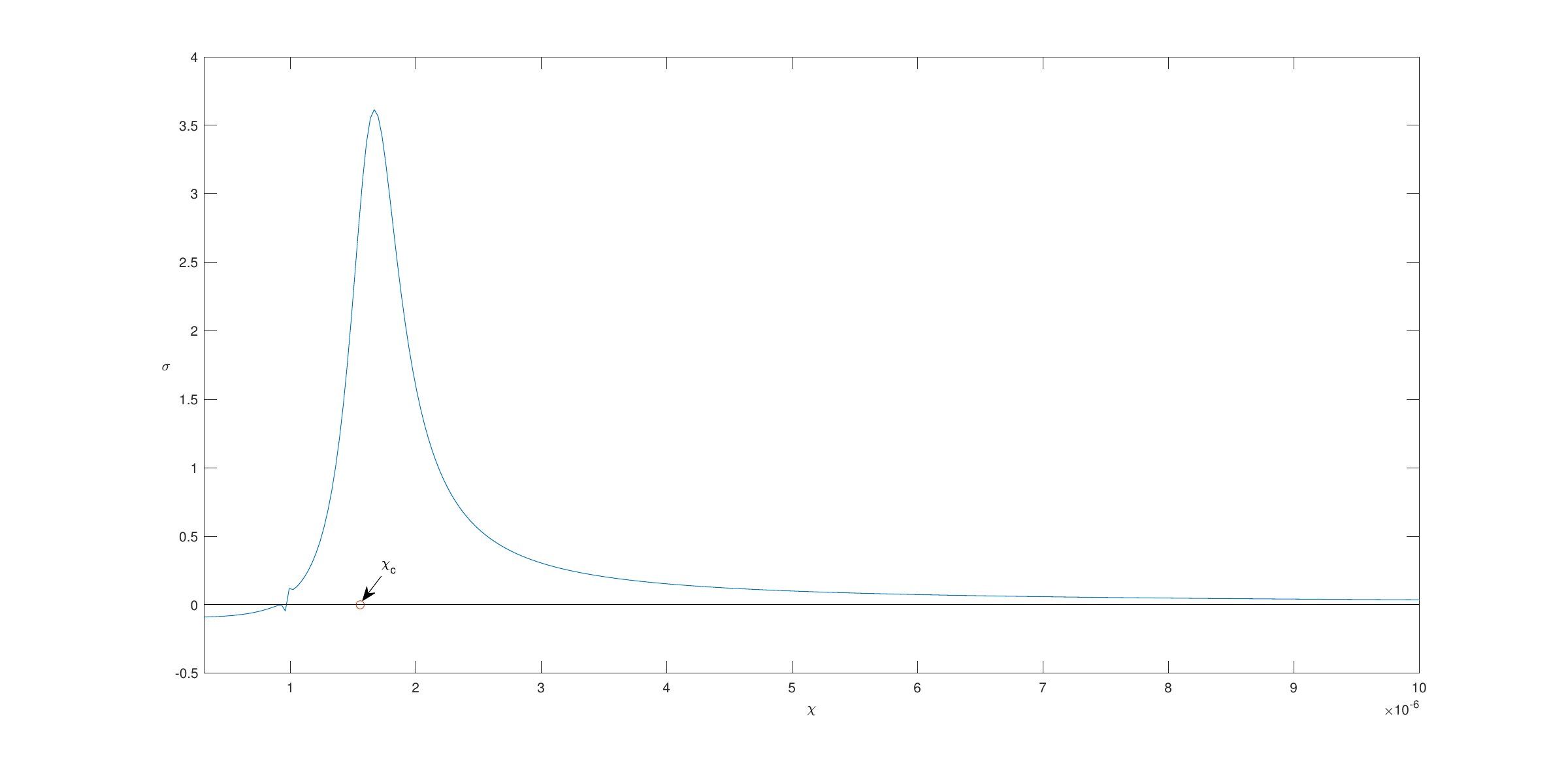}}
    \subfigure[]{
    \includegraphics[scale=0.18]{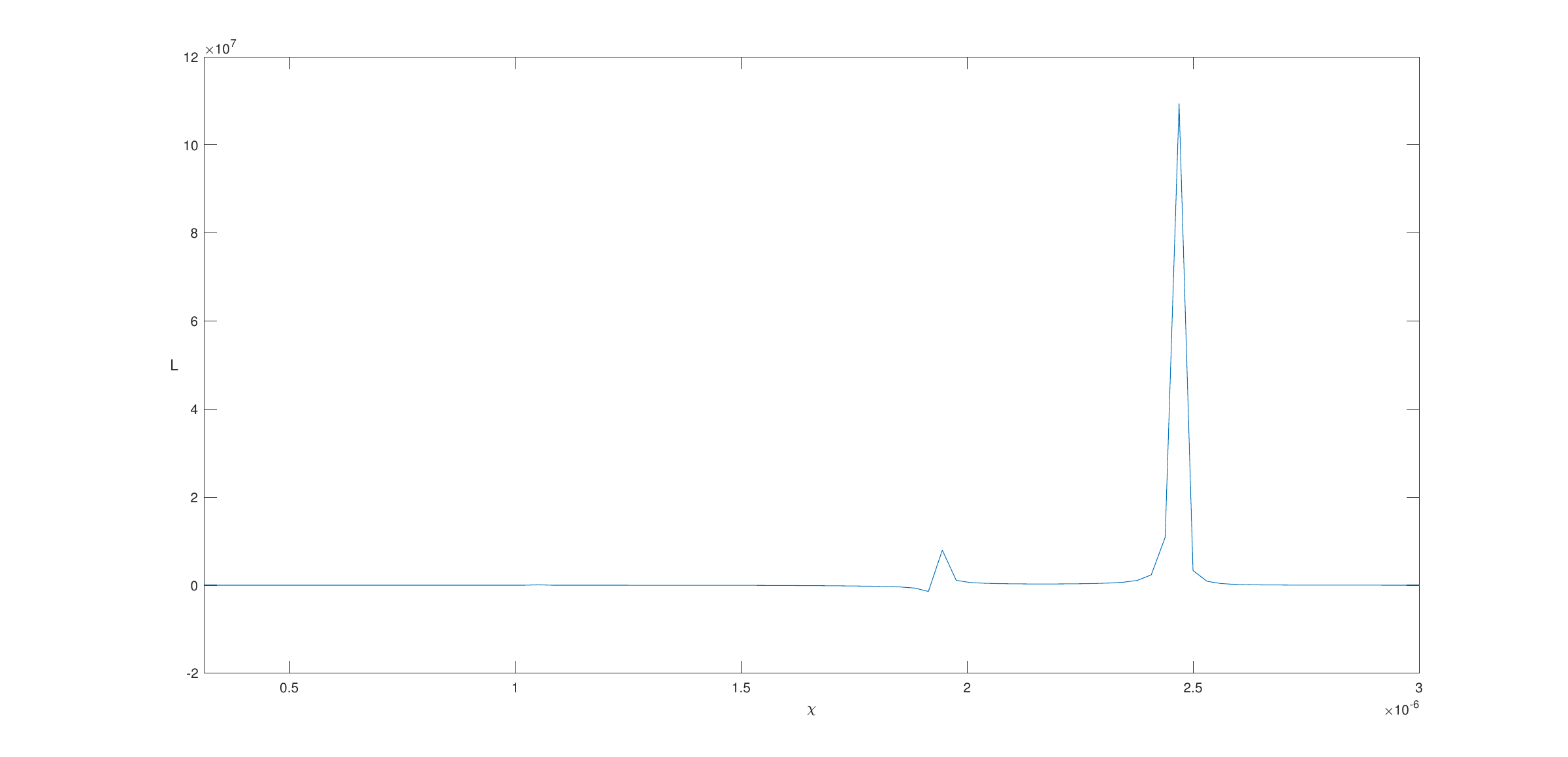}} 
    \caption{Behaviour of $\sigma$ and $L$ defined in \eqref{sigmaL} with respect to $\chi$ for parameters $S_1 \cup \{\gamma_1=\gamma_2=10^{-9}, \gamma_3=10^{-13}\}$, $\chi_c=1.555 \times 10^{-6}$ and for endemic equilibrium $E_2^*$.}
    \label{S2_sigmaL_E3}
\end{figure}

\begin{figure}[ht!]
    \centering
    \subfigure[]{
    \includegraphics[scale=0.18]{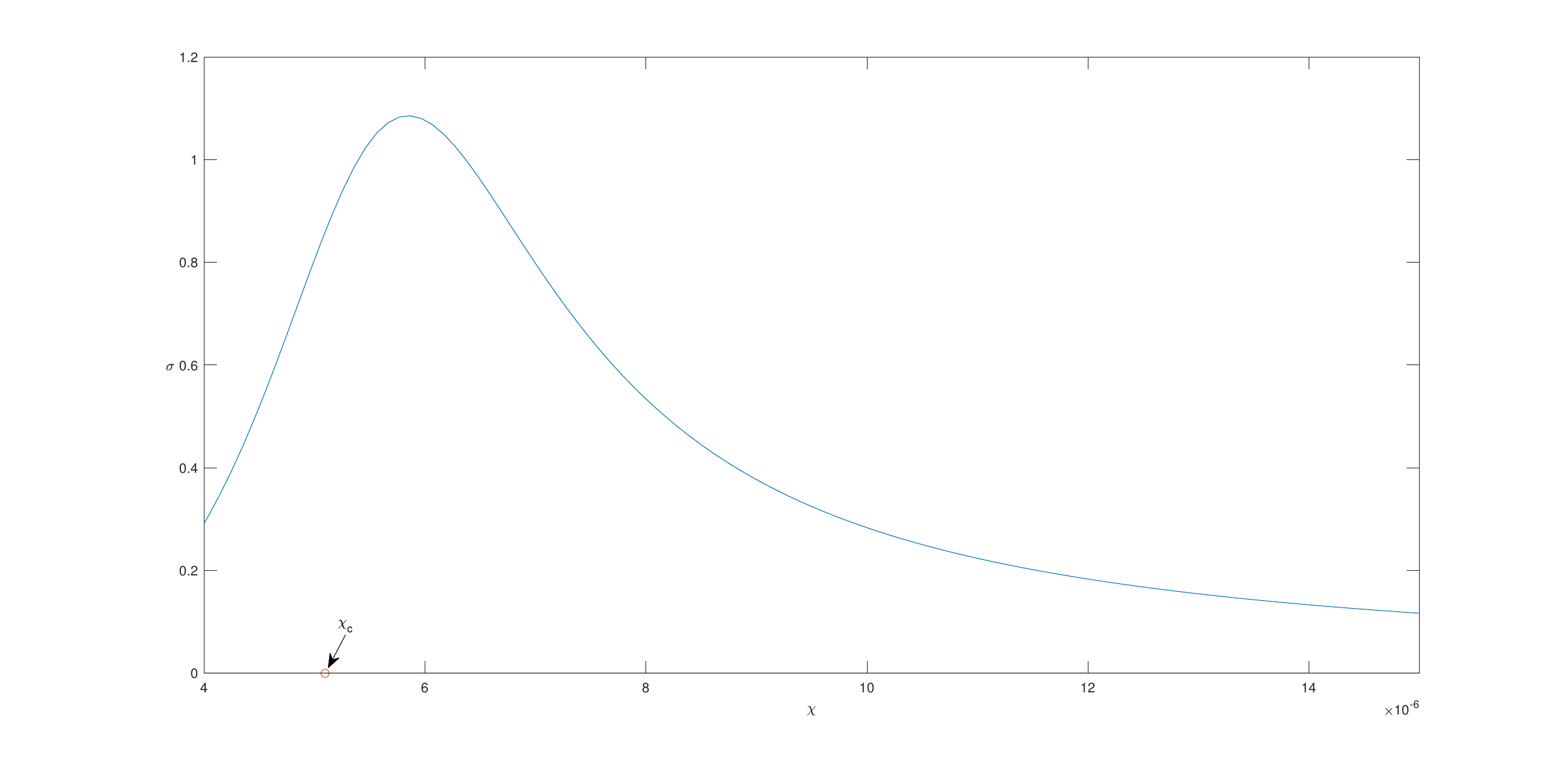}}
    \subfigure[]{
    \includegraphics[scale=0.18]{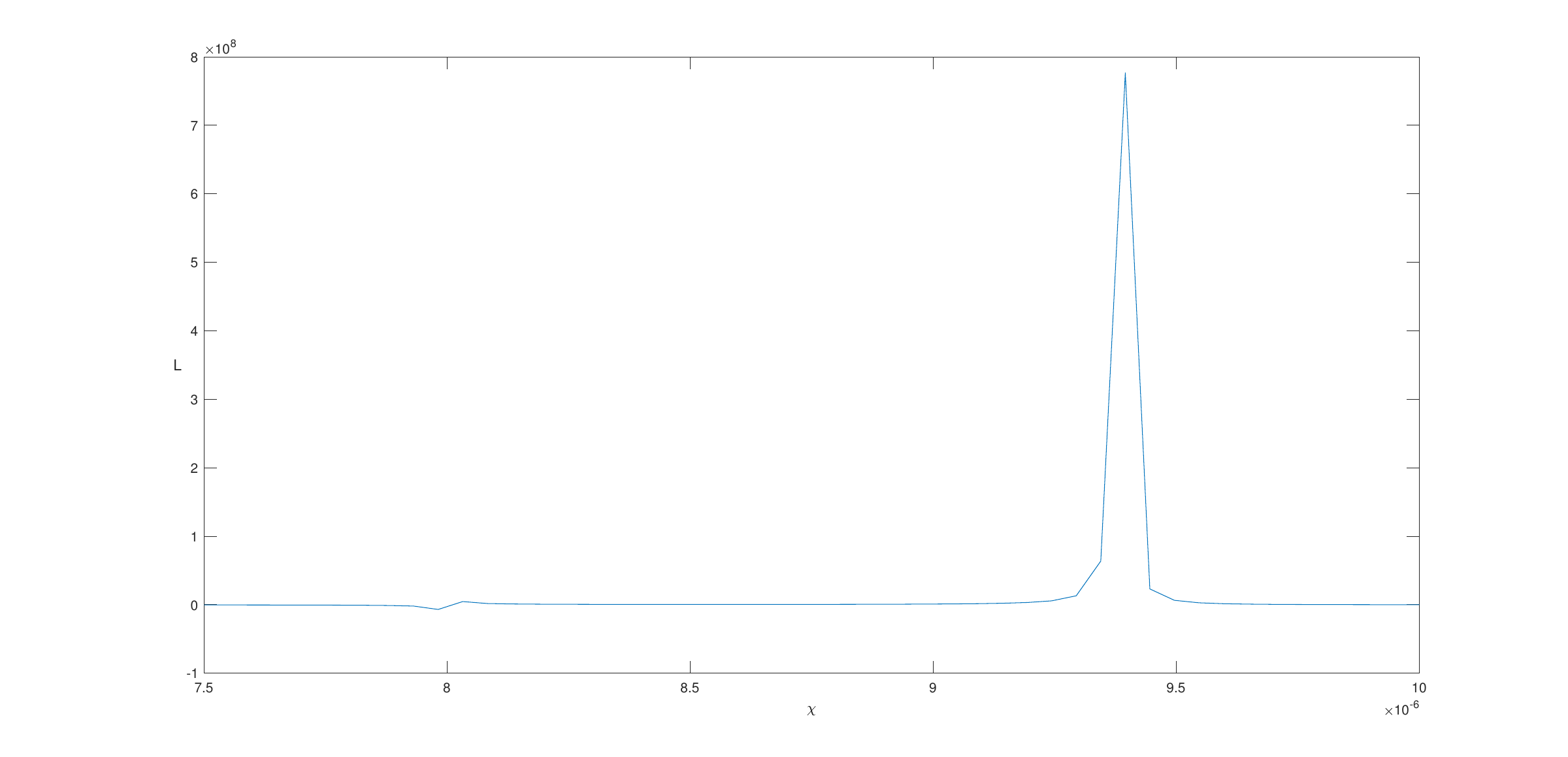}} 
    \caption{Behaviour of $\sigma$ and $L$ defined in \eqref{sigmaL} with respect to $\chi$ for parameters $S_2 \cup \{\gamma_1=\gamma_2=10^{-9}, \gamma_3=10^{-13}\}$, $\chi_c=5.0945 \times 10^{-6}$.}
    \label{S4_sigmaL}
\end{figure}

\section{Conclusion}
In this paper, we analysed a model for \emph{Mycobacterium tuberculosis} infection, taking into account a random movement of the populations, hence, both self-diffusion and chemotaxis are considered. Via linear instability analysis of the endemic equilibrium $\bar{E}$ (when it exists), we proved that in presence of both self-diffusion and chemotaxis, Turing instability can occur for $\bar{E}$, therefore we found sufficient conditions guaranteeing linear stability of $\bar{E}$ in the absence of diffusion and linear instability of $\bar{E}$ in presence of diffusion. Moreover, we perform a weakly nonlinear analysis and the solutions of the Landau amplitude equations have been studied. Finally, via numerical simulations, the results we found have been analysed and the interaction between macrophages and bacteria populations has been analysed.

\section*{Acknowledgments}
This paper has been performed under the auspices of the GNFM of INdAM.
R. De Luca and F. Capone acknowledge the support of National Recovery and Resilience Plan (NRRP) funded by the European Union - NextGenerationEU - Project Title ``Mathematical Modeling of Biodiversity in the Mediterranean sea: from bacteria to predators, from meadows to currents" - project code $P202254HT8$ (CUP $B53D23027760001$) and the support of grant no. MUR-PRIN PNNR $2022$ - Project Title ``Modelling complex biOlogical systeMs for biofuEl productioN and sTorAge: mathematics meets green industry" - project code $202248TY47$ (CUP $E53D23005430006$).

\bibliographystyle{unsrt}
\bibliography{references}

\begin{thebibliography}{10}

\bibitem{WHO2023}
World~Health Organization et~al.
\newblock Global tuberculosis report 2023.
\newblock In {\em Global tuberculosis report 2023}. 2023.

\bibitem{Du2017}
Y.~Du, J.~Wu, and J.~M. Heffernan.
\newblock A simple in-host model for {M}ycobacterium tuberculosis that captures
  all infection outcomes.
\newblock {\em Mathematical Population Studies}, 24(1):37--63, 2017.

\bibitem{ZFH2020}
W.~Zhang, F.~Frascoli, and J.M. Heffernan.
\newblock Analysis of solutions and disease progressions for a within-host
  tuberculosis model.
\newblock {\em Mathematics in Applied Sciences and Engineering}, 1(1):39--49,
  2020.

\bibitem{ZFH2021}
W.~Zhang, L.~Ellingson, F.~Frascoli, and J.~Heffernan.
\newblock An investigation of tuberculosis progression revealing the role of
  macrophages apoptosis via sensitivity and bifurcation analysis.
\newblock {\em Journal of mathematical biology}, 83(3):31, 2021.

\bibitem{zhang2020}
W.~Zhang.
\newblock Analysis of an in-host tuberculosis model for disease control.
\newblock {\em Applied Mathematics Letters}, 99:105983, 2020.

\bibitem{natalini2010}
F.~Clarelli and R.~Natalini.
\newblock A pressure model of immune response to {M}ycobacterium tuberculosis
  infection in several space dimensions.
\newblock {\em Math Biosci Eng}, 7(2):277--300, 2010.

\bibitem{T1952}
A.M. Turing.
\newblock The {C}hemical {B}asis of {M}orphogenesis.
\newblock {\em Philosophical Transactions of the Royal Society of London.
  Series B, Biological Sciences}, 237(641):37--72, 1952.

\bibitem{GLS2021}
V.~Giunta, M.C. Lombardo, and M.~Sammartino.
\newblock Pattern formation and transition to chaos in a chemotaxis model of
  acute inflammation.
\newblock {\em SIAM Journal on Applied Dynamical Systems}, 20(4):1844--1881,
  2021.

\bibitem{CDLFLM2023}
F.~Capone, R.~De~Luca, L.~Fiorentino, V.~Luongo, and G.~Massa.
\newblock Turing instability for a {L}eslie--{G}ower model.
\newblock {\em Ricerche di Matematica}, pages 1--18, 2023.

\bibitem{CDL2017}
F.~Capone and R.~De~Luca.
\newblock On the nonlinear dynamics of an ecoepidemic reaction--diffusion
  model.
\newblock {\em International Journal of Non-Linear Mechanics}, 95:307--314,
  2017.

\bibitem{CDLT2018}
F.~Capone, R.~De~Luca, and I.~Torcicollo.
\newblock Influence of diffusion on the stability of a full brusselator model.
\newblock {\em Rendiconti Lincei}, 29(4):661--678, 2018.

\bibitem{GLSS2013}
G.~Gambino, M.C. Lombardo, M.~Sammartino, and V.~Sciacca.
\newblock Turing pattern formation in the {B}russelator system with nonlinear
  diffusion.
\newblock {\em Physical Review E}, 88(4):042925, 2013.

\bibitem{GammackEtAl2004}
D.~Gammack, C.~R. Doering, and D.~E. Kirschner.
\newblock Macrophage response to {M}ycobacterium tuberculosis infection.
\newblock {\em Journal of mathematical biology}, 48:218--242, 2004.

\bibitem{symon1972plasma}
D.~N.~K. Symon, I.~C. McKay, and P.~C. Wilkinson.
\newblock Plasma-dependent chemotaxis of macrophages towards {M}ycobacterium
  tuberculosis and other organisms.
\newblock {\em Immunology}, 22(2):267, 1972.

\bibitem{Murray2_2002}
J.~D Murray.
\newblock {\em Mathematical biology {II}: {S}patial models and biomedical
  applications}, volume~3.
\newblock Springer New York, 2002.

\bibitem{BGMS2024}
M.~Bisi, M.~Groppi, G.~Martal{\`o}, and C.~Soresina.
\newblock A chemotaxis reaction--diffusion model for {M}ultiple {S}clerosis
  with {A}llee effect.
\newblock {\em Ricerche di Matematica}, 73(Suppl 1):29--46, 2024.

\bibitem{TLS2014}
E.~Tulumello, M.C. Lombardo, and M.~Sammartino.
\newblock Cross-diffusion driven instability in a predator-prey system with
  cross-diffusion.
\newblock {\em Acta Applicandae Mathematicae}, 132(1):621--633, 2014.

\bibitem{LBBGPS2017}
M.C. Lombardo, R.~Barresi, E.~Bilotta, F.~Gargano, P.~Pantano, and
  M.~Sammartino.
\newblock Demyelination patterns in a mathematical model of multiple sclerosis.
\newblock {\em Journal of mathematical biology}, 75:373--417, 2017.

\bibitem{WK2001}
J.~E. Wigginton and D.~Kirschner.
\newblock A model to predict cell-mediated immune regulatory mechanisms during
  human infection with {M}ycobacterium tuberculosis.
\newblock {\em The Journal of Immunology}, 166(3):1951--1967, 2001.

\bibitem{MK2004}
S.~Marino and D.~E. Kirschner.
\newblock The human immune response to {M}ycobacterium tuberculosis in lung and
  lymph node.
\newblock {\em Journal of theoretical biology}, 227(4):463--486, 2004.

\bibitem{GammackEtAl2005}
D.~Gammack, S.~Ganguli, S.~Marino, J.~Segovia-Juarez, and D.~E. Kirschner.
\newblock Understanding the immune response in tuberculosis using different
  mathematical models and biological scales.
\newblock {\em Multiscale Modeling \& Simulation}, 3(2):312--345, 2005.

\bibitem{SudEtAl2006}
D.~Sud, C.~Bigbee, J.~L Flynn, and D.~E. Kirschner.
\newblock Contribution of cd8+ t cells to control of {M}ycobacterium
  tuberculosis infection.
\newblock {\em The Journal of Immunology}, 176(7):4296--4314, 2006.

\bibitem{Murray1_2002}
J.~D. Murray.
\newblock {\em Mathematical biology {I}: {A}n introduction}, volume~3.
\newblock Springer, 2002.

\bibitem{Merkin1997}
D.R. Merkin.
\newblock Stability of linear autonomous systems.
\newblock In {\em Introduction to the theory of stability}, pages 133--158.
  Springer, 1997.

\bibitem{Garvie2007}
M.R. Garvie.
\newblock Finite-difference schemes for reaction--diffusion equations modeling
  predator--prey interactions in {M}atlab.
\newblock {\em Bulletin of mathematical biology}, 69:931--956, 2007.

\end{thebibliography}

\end{document}